\algnewcommand{\IIf}[1]{\State\algorithmicif\ #1\ \algorithmicthen}
\algnewcommand{\EndIIf}{\unskip\ \algorithmicend\ \algorithmicif}
\newcommand{\E}[1]{\mathbb{E}\left[#1\right]}
\newcommand{\Eo}[1]{\mathbb{E}^1\left[#1\right]}
\newcommand{\Ez}[1]{\mathbb{E}^0\left[#1\right]}
\newcommand{\Ei}[1]{\mathbb{E}^i\left[#1\right]}
\newcommand{\eps}{\varepsilon}
\renewcommand{\P}[1]{\mathbb{P}\left(#1\right)}
\newcommand{\Po}[1]{\mathbb{P}^1\left(#1\right)}
\newcommand{\Pt}[1]{\mathbb{P}^2\left(#1\right)}
\renewcommand{\Pi}[1]{\mathbb{P}^i\left(#1\right)}
\newcommand{\Pz}[1]{\mathbb{P}^0\left(#1\right)}
\newcommand{\bno}[1]{\bigl\lVert#1\bigr\rVert}
\newcommand{\babs}[1]{\bigl\lvert#1\bigr\rvert}
\newcommand{\kl}{\mathcal{D}_{\mathrm{KL}}}
\newcommand{\Pb}{\mathbb{P}}
\newcommand{\A}{\mathcal{A}}
\renewcommand{\S}{\mathcal{S}}
\newcommand{\U}{\mathcal{U}}
\newcommand{\supp}{\textsc{Supp}\xspace}
\newcommand{\calE}{\mathcal{E}}
\newcommand{\qminus}{q^-}
\newcommand{\qplus}{q^+}
\newcommand{\ind}[1]{\mathbb{I}{\left\{#1\right\}}}
\newcommand{\N}{\mathbb{N}}
\newcommand{\sw}{\textsc{SW}}
\newcommand{\gft}{\textsc{GFT}}
\newcommand{\egft}{\widehat \gft}
\newcommand{\blockdec}{\textsc{Block-Decomposition}\xspace}
\newcommand{\bd}{\textsc{BD}}
\DeclareMathOperator*{\argmax}{arg\,max}
\newcommand{\new}[1]{{#1}}
\newtheorem{theorem}{Theorem}
\newtheorem{lemma}{Lemma}
\newtheorem{claim}{Claim}
\newtheorem{observation}{Observation}
\title{An \texorpdfstring{$\alpha$-regret\xspace}\ analysis of Adversarial Bilateral Trade}
\author[1]{Yossi Azar}
\author[1]{Amos Fiat}
\author[2]{Federico Fusco}
\affil[1]{Tel Aviv University, Israel}
\affil[2]{Sapienza Università di Roma, Italy}
\begin{document}

\maketitle

\begin{abstract}
    We study sequential bilateral trade where sellers and buyers valuations are completely arbitrary ({\sl i.e.}, determined by an adversary). Sellers and buyers are strategic agents with private valuations for the good and the goal is to design a mechanism that maximizes efficiency (or gain from trade) while being incentive compatible, individually rational and budget balanced. In this paper we consider gain from trade, which is harder to approximate than social welfare.
    
    We consider a variety of feedback scenarios and distinguish the cases where the mechanism posts one price and when it can post different prices for buyer and seller. We show several surprising results about the separation between the different scenarios. In particular we show that (a) it is impossible to achieve sublinear $\alpha$-regret for any $\alpha<2$, (b) but with full feedback sublinear $2$-regret is achievable; (c) with a single price and partial feedback one cannot get sublinear $\alpha$ regret for any constant $\alpha$  (d) nevertheless, posting two prices even with one-bit feedback achieves sublinear $2$-regret, and (e) there is a provable separation in the $2$-regret bounds between full and partial feedback. 
\end{abstract}


\section{Introduction}

    The bilateral trade problem arises when two rational agents, a seller and a buyer, wish to trade a good; they both hold a private valuation for it, and their goal is to maximize their utility. This scenario emerges naturally in many applications, such as ridesharing systems (e.g., Uber or Lyft), where trades between sellers/drivers and buyers/riders are managed by an intermediating platform. The solution to the bilateral trade problem consists in designing a mechanism that intermediates between the two parties to make the trade happen. 
    
    Ideally, the mechanism should maximize social welfare even though the agents act strategically ({\sl incentive compatibility}) and should guarantee non-negative utility to the agents ({\sl individual rationality}); moreover, we are interested in mechanisms for bilateral trade that do not subsidize the agents ({\sl budget balance}). Obvious mechanisms that satisfy incentive compatibility, individual rationality, and budget balanced, are posted price mechanisms. 
    Two common metrics are used to measure the efficiency of a mechanism: social welfare subsequent to trade and gain from trade ({\sl i.e.}, the increase in social welfare). Consider a mechanism that posts prices $p$ (price for the seller) and $q$ (price for the buyer) to agents with valuations $s$ and $b$, formally we have:
    \begin{itemize}
        \item \textbf{Social Welfare}: $\sw(p,q,s,b) = s + (b-s) \cdot \ind{s \le p\le q \le b}$\footnote{We use $\ind{Q}$ for the indicator variable that takes the value $1$ if the predicated $Q$ is true and zero otherwise.}
        \item \textbf{Gain from trade}: $\gft(p,q,s,b) = (b-s)\cdot \ind{s \le p\le q \le b}$
    \end{itemize}
     It is clear from these expressions that if we are interested in exact optimality then maximizing gain from trade is equivalent to maximizing social welfare.
     However, \citet{Myerson83} showed that there are {\sl no mechanisms} for bilateral trade that are simultaneously social welfare  maximizing (alternately, gain from trade maximizing), incentive compatible, individually rational, and budget balanced\footnote{This impossibility result holds even when the (private) agents valuations are assumed to be drawn from some (public) random distributions and the incentive compatibility is only enforced in expectation.}.
     It follows that the best one can hope for is an incentive compatible, individually rational, and budget balanced mechanism that \textit{approximates} the optimal social welfare. This creates an asymmetry between the two metrics:
    a multiplicative $c$ approximation to the maximal gain from trade implies an approximation at least as good ($\geq c$) to the maximal social welfare but not vice versa. Ergo, it is harder to approximate the gain from trade than to approximate social welfare. For example, consider an instance where the seller has valuation $0.99$ and the buyer is willing to pay up to $1$: irrespective of whether a trade occurs or  not, $99\%$ of the optimal social welfare is guaranteed. In particular, a mechanism that posts a price of zero and generates no trade gets a good approximation to the social welfare. Contrariwise, the gain from trade is non-zero only if the mechanism manages to post prices in the narrow $[0.99,1]$ interval. 

     The vast body of work subsequent to \citet{Myerson83} primarily considers the Bayesian version of the problem, where agents' valuations are drawn from some distribution and the efficiency is evaluated in expectation with respect to the valuations' randomness. There are many incentive compatible mechanisms that give a constant approximation to the social welfare \citep[see, e.g.,][and previous works]{CaiW23,LiuR023}.  
     On the other hand, finding a constant approximation to the gain from trade has been a long standing problem and only a recent paper of \citet{DengMSW21} has given a Bayesian incentive compatible mechanism for this problem. In this paper we deal with the harder scenario where an adversary determines seller and buyer valuations ({\sl i.e.},  valuations are not drawn from some distribution). Ergo, positive results in the Bayesian setting are inapplicable in our setting. 
     
     Following \citet{Nicolo21}, we consider the sequential adversarial bilateral trade problem, where at each time step $t$, a new seller-buyer pair arrives. The seller has some private valuation $s_t \in [0,1]$ representing the smallest price she is willing to accept; conversely, the buyers holds as private information $b_t \in [0,1]$, {\sl i.e.}, the largest price she is willing to pay to get the good. Concurrently, the mechanism posts price $p_t$ to the seller and $q_t$ to the buyer. If they both accept ($s_t \le p_t$ and $q_t \le b_t$), then the trade happens at those prices, otherwise the agents leave forever. By the requirement that the mechanism be budget balanced, the prices posted by the mechanism are such that $p_t \le q_t.$ At the end of each time step, the mechanism receives some feedback that depends on the outcome of the trade. Ideally, we would like to have a strategy for the sequential bilateral trade problem whose average gain from trade converges to that of the best fixed posted price mechanism in hindsight. However, as \citet{Nicolo21} showed, this is a hopeless task. 
     
     Our goal in this work is then to achieve mechanisms whose average performance converges to a constant factor of the best fixed posted price mechanism in hindsight. We would like to find the smallest $\alpha \ge 1$ such that the $\alpha$-regret \citep{KakadeKL09} is sublinear in the time horizon $T$\new{. Formally, we would like the following to hold:}
    \[
        \max_{p,q}\sum_{t=1}^T \gft(p,q,s_t,b_t) - \alpha \cdot \E{\sum_{t=1}^T \gft(p_t,q_t,s_t,b_t)} \new{\in o(T)} .
    \]
    If the goal is only to maximize gain from trade, there is never any sense in offering two different prices (to the seller and buyer). However, critically, offering two prices is provably helpful in the context of a learning algorithm. 
    
    To conclude the description of our learning framework, we specify the type of feedback received by the mechanism. We focus on the two extremes of the feedback spectrum. On the one hand we study the full feedback model, where, after prices are posted, the mechanism learns both seller and buyer valuations $(s_t,b_t)$. On the other hand, we investigate a more realistic {\sl partial feedback} model, the {\em one-bit feedback}, where the learner only discovers if a trade took place or not. We also consider an intermediate (partial feedback) model, called the {\sl two-bit feedback model}. In this model, the learner posts (one or two) prices, and learns if the buyer is willing to trade and if the seller is willing to trade, at these prices. Clearly, a trade actually occurs only if both are willing to trade. Note that these two models enforce the desirable property that buyers and sellers only communicate to the mechanism a minimal amount of information useful for the trade, without disclosing their actual valuations. 
    \new{We conclude with a final observation concerning the adversarial model. Differently from the stochastic setting, where the valuations are drawn i.i.d. from a fixed but unknown distribution, the adversarial setting that we study models possibly non-stationary data generation processes. For this reason, simply using past observations to estimate the gain from trade function and infer the next prices to post does not work.}

    \begin{table}[t!]
        \centering
        \renewcommand{\arraystretch}{1.2}
        \begin{tabular}{c|c|c|c|}
        \cline{2-4}
        & \multicolumn{1}{l|}{Full Feedback} & \multicolumn{1}{l|}{Two-bit feedback} & One-bit feedback                   \\ \hline
        \multicolumn{1}{|l|}{Single price} & \cellcolor[HTML]{32CB00}$O(\sqrt T)$ - Thm. \ref{thm:upper-full}                  & $\cellcolor[HTML]{FD6864}{\Omega(T)}$ - Thm. \ref{thm:lower-two-bits-one-price}                                   & \cellcolor[HTML]{FD6864} \\ \hline
        \multicolumn{1}{|l|}{Two prices}   & \cellcolor[HTML]{32CB00} ${\Omega(\sqrt T)}$ - Thm. \ref{thm:lower-full}                 &                  $\cellcolor[HTML]{FFE135} 
 {\Omega(T^{\nicefrac 23})}$ - Thm. \ref{thm:lower-two-bits-two-prices}                                     & $\cellcolor[HTML]{FFC702} {O(T^{\nicefrac 34})}$ - Thm. \ref{thm:upper-one-bit-two-prices}  \\ \hline
        \end{tabular}
        \caption{Summary of 2-regret results in various settings.}
        \label{table:results}
    \end{table}
    
\subsection{Overview of our Results}
    
    We present our results for the adversarial sequential bilateral trade problem (see also Table \ref{table:results}, same colors correspond to same minimax regret rate).
    
\begin{itemize}
    \item We show that no learning algorithm can achieve sublinear $\alpha$-regret for any $\alpha<2$ (\Cref{thm:lower-full-2-eps}). This holds in the full feedback model (and thus for both partial feedback models). 
    \item We give a \new{single-price} learning algorithm with full feedback that achieves $\tilde O(\sqrt{T})$ $2$-regret\footnote{The $\tilde O$ hides poly-logarithmic terms} (\Cref{thm:upper-full}) and show that no algorithm can improve upon this (\Cref{thm:lower-full}). \new{Notably, our lower bound holds even for algorithms that are allowed to post two prices.}
    \item We show that if limited to a single price, no learning algorithm achieves sublinear $\alpha$-regret for any constant $\alpha$ in either partial feedback models, {\sl i.e.}. one or two-bit feedback (\Cref{thm:lower-two-bits-one-price}). 
    \item Given the negative results above, we show that allowing the learning algorithm to post two prices gives sublinear \new{(in particular, $O(T^{\nicefrac 34})$)} $2$-regret even for one-bit feedback (\Cref{thm:upper-one-bit-two-prices}). This means that our learning algorithm achieves, on average, at least half of the gain from trade of the best fixed price in hindsight, using only one-bit of feedback at each step\new{.}
    \item We show a separation between partial versus full feedback by giving a $\Omega(T^{\nicefrac 23})$ lower bound on the $2$-regret achievable by any learning algorithm in the two prices and two-bit feedback model (\Cref{thm:lower-two-bits-two-prices}\footnote{The lower bound construction given in the conference version of this paper \citep{AzarFF22} does not provide the required bound and hence the version presented here is the appropriate one.}).
\end{itemize}    
 
    The gaps in Table 1 may appear misleading because upper bounds in weaker models apply in stronger models and lower bounds in stronger models apply in weaker models. The only remaining open gap in our results is between the $\Omega(T^{\nicefrac 23})$ lower bound (light orange in \Cref{table:results}) and the $O(T^{\nicefrac 34})$ upper bound (dark orange in \Cref{table:results}) that hold for two prices and partial feedback (either one or two-bit feedback). It is also worth noting that in our worst case model two prices are required but one-bit suffices for sublinear $2$-regret. This is a different qualitative behaviour that the one observed in the stochastic case \citep{Nicolo21}, where it is enough to use one single price but the two-bit feedback is required to achieve sublinear ($1$-)regret. One may wonder why two prices are helpful at all in our adversarial setting, given their suboptimality in maximizing the gain from trade. It turns out that \new{by} randomizing over two prices it is possible to estimate the (non-stochastic) valuations of the agents.

\subsection{Technical challenges}
 
    \paragraph{From experts to prices} As already observed in \citet{Nicolo21}, the full feedback model nicely fits into the prediction with experts framework \citep{Nicolo06}: there is a clear mapping between expert\new{s} and prices and the mechanism can easily reconstruct the gain that each price/expert experiences using the feedback received. The main challenge here is given by the continuous nature of the possible prices, as the usual experts framework assumes a finite number of experts. There are workarounds that exploit some regularity of the gain function such as the Lipschitz property or convexity/concavity \cite[see, e.g.,][]{Nicolo06,Hazan16,Slivkins19}. Unfortunately, gain from trade is not such a function. Moreover, in our adversarial setting we cannot adopt the smoothing trick used in \citet{Nicolo21}, where they assume some regularity on the agents distribution to argue that $\E{\gft(\cdot)}$ becomes Lipschitz. Our main technical tool to address this issue is a discretization claim that allows us to compare the performance of the best fixed price in $[0,1]$ with that of the best on a finite grid. 
    
    \paragraph{A magic estimator} Consider the two partial feedback models; there at each time step $t$ the learner only receives a minimal information about what happened at time $t$: namely, one or two-bit versus the full knowledge of $\gft_t(\cdot)$. This type of feedback is strictly more difficult than the classic bandit feedback \citep{Nicolo06}, where the learner always observes at least the gain its action incurred. Our main technical tool to circumvent this issue is given by the design of a procedure that, posting two randomized prices, is able to estimate the $\gft_t$ in a given price. This unbiased estimator is then used in a carefully designed block decomposition of the time horizon to achieve sublinear $2$-regret in presence of this very poor feedback. 
    
    \paragraph{Lower bounds} For our lower bounds we adopt two different strategies. In Theorems \ref{thm:lower-full-2-eps} and \ref{thm:lower-two-bits-one-price} we construct randomized instances where no algorithm can learn anything: the only prices the learner could use to discriminate between different instances are cautiously hidden, while all the other prices do not reveal any useful information, given the type of feedback considered. The randomized instances used in Theorems \ref{thm:lower-full} and \ref{thm:lower-two-bits-two-prices} involve instead a more structured approach; this is due to the challenge posed by the contemporary handling of the multiplicative and additive part of the $2$-regret. To this end, we carefully hide the optimal ex-post prices and make hard to the learner to achieve small ($1$-)regret with respect to some ``second best'' prices.
    A crucial task we often face is to ``hide" some small finite sets of critical prices from the learning algorithm. We employ two techniques to do so: repeatedly dividing overlaps (Theorems \ref{thm:lower-full-2-eps} and \ref{thm:lower-full}) and random shifts (as in the proof of Theorems \ref{thm:lower-two-bits-one-price} and \ref{thm:lower-two-bits-two-prices}).

\subsection{Related work}

    The work that is most closely related to ours is \citet{Nicolo21}. There, the authors study the same sequential bilateral trade problem as we do, with the objective of minimizing the ($1$-)regret with respect to the best fixed price. They focus on the (easier) stochastic model, where the adversary chooses a distribution over valuations and not a deterministic sequence like in our model. A full characterization of the minimax regret regimes is offered, for the same type of feedback we consider (note that the one-bit feedback is only addressed in their extended version \citep{Nicolo21long}) and with various regularity assumptions on the underlying random distributions. \citet{Nicolo21} also give the first result for the adversarial setting we consider, showing that no learning algorithm can achieve sublinear $1$-regret.
    
    \paragraph{Online Learning and Economics} Regret minimization has been applied to many economic models: auctions \citep{morgenstern2015pseudo,Cesa-BianchiGM15,LykourisST16,DaskalakisS16,WeedPR16,BalseiroG19}, contract design \citep{HoSV16,ZhuBYWJJ22,DuettingGSW23}, and Bayesian persuasion \citep{CastiglioniCMG23}. In particular, \citet{kleinberg2003value} studied the one-sided pricing problem, offering a $\tilde O(T^{\nicefrac 23})$ upper bound on the regret in the adversarial setting and opening a fruitful line of research \citep{blum2004online,blum2005near,bubeck2017online}.
    
    \paragraph{The notion of $\alpha$-regret} The notion of $\alpha$-regret has been formally introduced by \citet{KakadeKL09}, but was already present in \citet{KalaiV05}. It has then found applications in linear \citep{Garber21} and submodular optimization \citep{RoughgardenW18}, learning with sleeping actions \citep{Emamjomeh-Zadeh21}, combinatorial auctions \citep{RoughgardenW19} and market design \citep{NiazadehGWSB21}. We mention that the notion of $\alpha$-regret is commonly referred to as competitive ratio in the literature concerning adversarial bandits with knapsack \citep[see, e.g.,][]{ImmorlicaSSS22}. 
    
    \paragraph{Online learning with partial feedback} Our work fits in the line of research that studies online learning with feedback models different from full information and the bandit ones; our one and two-bit feedback models share similarities with the feedback graphs model \citep[see, e.g., ][]{AlonCGMMS17,HoevenFC21,EspositoFHC22} and the partial monitoring framework \citep[see, e.g., ][]{BartokFPRS14,LattimoreS19}.   
    
     \paragraph{Bilateral Trade} While \citet{Myerson83} were the first to thoroughly investigate the bilateral trade problem in the Bayesian setting with their famous impossibility result, it was  the seminal paper of Vickrey \citep{Vickrey61} that introduced the problem, proving that any mechanism that is welfare maximizing, individually rational, and incentive compatible may not be budget balanced. In the Bayesian setting, it was only very recently that \citet{DengMSW21} gave the first (Bayesian) incentive compatible, individually rational and budget balanced mechanism achieving a constant factor approximation of the optimal gain from trade. Prior to this paper, a  posted price $O\big(\log\frac 1r\big)$ approximation bound was achieved by \citet{Colini-Baldeschi17}, with $r$ being the probability that a trade happens (i.e., the value of the buyer is higher than the value of the seller). The literature also includes many individually rational, incentive compatible and budget balanced mechanisms achieving a constant factor approximation of the optimal social welfare. \citet{BlumrosenD14} proposed a simple posted price mechanism, the median mechanism, yielding a $2$-approximation of the optimal social welfare; the same authors then implemented a randomized fixed price mechanism improving the approximation to $\nicefrac e{e-1}$ \citep{BlumrosenD21}. Very recently, \citet{CaiW23} and \citet{LiuR023} further improved the state of the art for welfare maximizing fixed prices mechanisms, by providing a $0.71$ approximation. A related line of work investigates the single-sample version of the problem. Recently, \citet{Duetting20} showed that even posting one single sample from the seller distribution as price is enough to achieve a $2$ approximation to the optimal social welfare, this result has been further improved by \citet{CaiW23} and \citet{LiuR023}. The class of fixed price mechanism is of particular interest as it has been showed that all (dominant strategy) incentive compatible and individually rational mechanisms that enforce a stricter notion of budget balance, i.e., the so-called strong budget balance (where the mechanism is not allowed to subsidize or extract revenue from the agents) are indeed fixed price \citep{hagerty1987robust,Colini-Baldeschi16}.

    \paragraph{Conference Version and follow-up Work} A preliminary version of this work appeared in \citet{AzarFF22}, while a journal version appeared in \citet{AzarFF24}. In a follow-up work, \citet{Cesa-BianchiCCF23} studied the repeated bilateral trade problem against a smooth adversary, where the sequence of valuation is generated following smooth (non-stationary) distributions generated by an adversary. The authors provide tight results for this smoothed version of the problem both in the full and partial feedback. Recently, \citet{BernasconiCCF24} provided the first no-regret algorithms in the adversarial setting by moving from a per-step notion of budget balance (like the one enforced here) to a (weaker) global one. 

\section{Preliminaries}\label{sec:preliminaries}

\begin{algorithm*}[t!]
\caption*{\textbf{Learning Protocol of Sequential Bilateral Trade}}
\label{a:learning-model}
\begin{algorithmic}
\For{time $t=1,2,\ldots$}
    \State a new seller/buyer pair arrives with (hidden) valuations $(s_t,b_t) \in [0,1]^2$
    \State the learner posts prices $p_t, q_t \in [0,1]$
    \State the learner receives a (hidden) reward $ \gft_t(p_t, q_t) \in [0,1]$
    \State a feedback $z_t$ is revealed
\EndFor
\end{algorithmic}
\end{algorithm*}

The formal protocol for the sequential bilateral trade follows \citet{Nicolo21}. At each time step $t$, a new pair of seller and buyer arrives, each with private valuations $s_t$ and $b_t$ in $[0,1]$; the learner posts two prices: $p_t \in [0,1]$ to the seller and $q_t\in [0,1]$ to the buyer. A trade happens if and only if both agents agree to trade, i.e., when $s_t \le p_t$ and $ q_t \le b_t.$
Since we want our mechanism to enforce budget balance, we require that $p_t \le q_t$ for all $t.$ When a trade occurs, the learner is awarded with the resulting increase in social welfare, i.e., $b_t - s_t.$ The learner then observes some feedback $z_t.$ The gain from trade at time $t$ depends on the valuations $s_t$ and $b_t$ and on the price posted. To simplify the notation we introduce the following:
\[
    \gft_t(p, q) := \gft(p, q, s_t,b_t) =\ind{s_t \le p \le q \le b_t}\cdot (b_t - s_t)
\]
When the two prices are equal, we omit one of the arguments to simplify the notation.

Given any constant $\alpha \ge 1$, the $\alpha$-regret of a learning algorithm $\A$ against a sequence of valuations $\S$ on time horizon $T$ is defined as follows
\[
     R_T^\alpha(\A,\S) :=  \max_{p,q \in [0,1]^2}\sum_{t=1}^T \gft_t(p,q) - \alpha \cdot \sum_{t=1}^T \E{\gft_t(p_t,q_t)}.
\]
In the right side of the equality the dependence on $\S$ is contained in the $\gft_t(\cdot)$. Note that the expectation in the previous formula is with respect to the internal randomization of the learning algorithm: $p_t$ and $q_t$ are the (possibly random) prices posted by $\A$.

The $\alpha$-regret of a learning algorithm $\A$, without specifying the dependence of the sequence, is defined as its $\alpha$-regret against the ``worst'' sequence of valuations: $
    R_T^\alpha(\A) := \sup_{\S} R_T^\alpha(\A,\S)$. 
Stated differently, the performance of an algorithm is measured against an oblivious adversary that generates the sequence of valuations ahead of time: the learner has to perform well on {\em all} possible sequences. In this paper we study the minimax $\alpha$-regret, $R^{\alpha, \star}_T$, that measures the performance of the best (learning) algorithm versus the optimal fixed price in hindsight, on the worst possible instance: $R^{\alpha, \star}_T := \inf_{\A}R_T^\alpha(\A)$.
The set of learning algorithms we consider depends on which of the various settings we are dealing with. In this paper we consider a variety of such settings ({\sl i.e.}, how many prices are posted, what feedback is available, see below Sections \ref{subsection:prices} and \ref{subsection:feedback}). 

\subsection{Single Price {\sl vs.} Two Prices --- Seller price and Buyer price} \label{subsection:prices}
    
    We consider two families of learning algorithms, differing in the nature of the probe they perform, corresponding to two notions of what it means to be budget balanced: 
    
    \paragraph{Single price} If we want to enforce a stricter notion of budget balance, namely strong budget balance, the mechanism is neither allowed to subsidize nor extract revenue from the system. This is modeled by imposing $p_t = q_t$, for all $t$.  If $p_t = q_t$ we use the notation $\gft_t(p_t)$ to represent the gain from trade at time $t$.  
    
    \paragraph{Two prices} If we require that the mechanism enforces (weak) budget balance, it can post two different prices, $p_t$ to the seller and $q_t$ to the buyer, as long as $p_t\le q_t$. {\sl I.e.}, we only require that the mechanism never subsidize a trade, we do not require that the mechanism not make a profit. In this setting we use the notation $\gft_t(p_t,q_t)$ to represent the gain from trade at time $t$.   
    
    \begin{observation} \label{obs:twopriceobs}
        Note that the only reason to post two prices is to obtain information. For any pair of prices $(p,q)$ with $p < q$ posting any single price $\pi \in [p,q]$ guarantees no less gain from trade.
    \end{observation}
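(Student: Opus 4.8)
The plan is to prove the statement pointwise in $t$ and then sum over the time horizon. Fix an arbitrary time step $t$ with (hidden) valuations $(s_t,b_t)\in[0,1]^2$, an arbitrary budget-balanced pair of prices $(p,q)$ with $p<q$, and an arbitrary single price $\pi\in[p,q]$. I claim that $\gft_t(\pi)\ge\gft_t(p,q)$, and this is the only thing that needs to be checked.

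To see this, recall from the definition that $\gft_t(p,q)=\ind{s_t\le p\le q\le b_t}\cdot(b_t-s_t)$ takes only the two values $0$ and $b_t-s_t\ge 0$, and likewise $\gft_t(\pi)=\gft_t(\pi,\pi)=\ind{s_t\le\pi\le b_t}\cdot(b_t-s_t)$. If $\gft_t(p,q)=0$ there is nothing to prove, since gain from trade is nonnegative. Otherwise $\gft_t(p,q)=b_t-s_t$, which means $s_t\le p\le q\le b_t$; chaining this with $p\le\pi$ gives $s_t\le\pi$, and chaining $\pi\le q$ with $q\le b_t$ gives $\pi\le b_t$, so $s_t\le\pi\le b_t$ and hence $\gft_t(\pi)=b_t-s_t=\gft_t(p,q)$. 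In either case $\gft_t(\pi)\ge\gft_t(p,q)$.

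Summing the pointwise inequality over $t=1,\dots,T$ (for any horizon $T$) shows that the single price $\pi$ guarantees no less total gain from trade than the pair $(p,q)$, which is the assertion of the observation. There is no real obstacle here; the only point worth emphasizing is the asymmetry in the argument — a trade that is realized at the pair $(p,q)$ is automatically realized at any intermediate single price $\pi$, while the converse can fail (e.g.\ when $p<s_t<\pi<b_t<q$) — and this is precisely why, for the purpose of maximizing gain from trade alone, posting two distinct prices can never help and is useful only to extract additional feedback.
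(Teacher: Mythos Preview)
Your proof is correct; the paper states this observation without proof, and your argument is exactly the natural one the paper implicitly relies on. The key inclusion $\{s_t\le p\le q\le b_t\}\subseteq\{s_t\le\pi\le b_t\}$ for any $\pi\in[p,q]$ is all that is needed.
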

    
    In particular, any budget balanced algorithm that knows the future and seeks to maximize gain from trade while repeatedly posting the same prices will never choose two different prices.

\subsection{Feedback models} \label{subsection:feedback}

    We consider three types of feedback, presented here in increasing order of difficulty for the learner. (Note that   full feedback ``implies'' two-bit feedback which in turn implies one-bit feedback):
    
    \paragraph{Full feedback} In the full feedback model, the learner receives both seller and buyer valuations, immediately after posting prices the feedback to the learner at time $t$: formally, $z_t = (s_t,b_t)$. E.g., both seller and buyer send sealed bids that are opened immediately after the (one or two) price(s) are revealed. It follows from Observation \ref{obs:twopriceobs} that in the full feedback model there is never any reason to post two prices, as all the relevant information is revealed anyway. 
    
    \paragraph{Two-bit feedback}  In two-bit feedback the algorithm observes separately if the two agents agree on the given price, {\sl i.e.,} the feedback at time $t$ is $z_t = (\ind{s_t \le p_t}, \ind{q_t \le b_t})$.
    
    \paragraph{One-bit feedback} The one-bit feedback is arguably the minimal feedback the learner could get: the only information revealed is whether the trade occurred or not, {\sl i.e.}, $z_t = \ind{s_t \le p_t \le q_t \le b_t}$. 
    
    \subsection{Lower bounds via Yao's Minimax Theorem}
    
        An important technical tool we use to prove our lower bounds is the well known Yao's Minimax Theorem \citep{Yao77}. In particular, we apply the easy direction of the theorem, which reads (using our terminology) as follows: the $\alpha$-regret of a randomized learner against the worst-case valuations sequence is at least the minimax \new{$\alpha$-}regret of the optimal deterministic learner against a stochastic sequence of valuations. Formally,
        \[
            R^{\alpha,\star}_T \ge  \sup_{\A} \E{\max_{p,q \in [0,1]^2}\sum_{t=1}^T \gft_t(p,q) - \alpha \cdot \sum_{t=1}^T \gft_t(p_t,q_t)},
        \]
        where the expectation is with respect to the stochastic valuation sequence $\S$, while $\A$ denotes a deterministic learner. We remark that --- for the minimax theorem to be applicable --- the random instance $\S$ has to be oblivious of the learner.

\subsection{Regret due to discretization}

    Our first theoretical result concerns the study of how discretization impacts the regret. In particular, we compare the performance of the best fixed price taken from the continuous set $[0,1]$ to that of the best fixed price chosen from some discrete grid $Q\subset [0,1]$. Optimizing over a continuous set may seemingly be a problem because our object, gain from trade, is discontinuous (thus non-Lipschitz), non-convex and non-concave; one cannot use the ``standard approach" that makes use of such regularity conditions. What we show in the following \new{c}laim is that it is possible to compare the performance of the best continuous fixed price with {\em twice} that of the best fixed price on the grid. 
\begin{claim}[Discretization error]
    \label{claim:discretization}
        Let $Q = \{ q_0 = 0\leq q_1\leq  q_2 \cdots\leq  q_n = 1\}$ be any finite grid of prices in $[0,1]$ and let $\delta(Q)$ be the largest difference between two contiguous prices, {\sl i.e.}, $\max_{i=1,\ldots,n} |q_i - q_{i-1}|$,
        then for any sequence $\S = (s_1,b_1), \ldots, (s_T,b_T)$ and any price $p$ we have
        \[
            \sum_{t=1}^T \gft_t(p) \le 2 \cdot \max_{q \in Q} \sum_{t=1}^T \gft_t(q) + \delta(Q) \cdot T.         
        \]
        Let $\A$ be any learning algorithm that posts prices $(p_t,q_t)$, then the following inequality holds:
        \begin{equation}
        \label{eq:discretization_regret}
            R^2_T(\A) \le 2\sup_{\S} \left\{\max_{q \in Q} \sum_{t=1}^T \gft_t(q) - \sum_{t=1}^T \E{\gft_t(p_t,q_t)} \right\}+ \delta(Q)\cdot T.
        \end{equation}
    \end{claim}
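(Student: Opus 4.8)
The plan is to reduce the first inequality to a per-round comparison between the continuous price $p$ and its two neighbouring grid points, and then sum; the regret bound will follow by feeding this into the definition of $R^2_T(\A)$ together with \Cref{obs:twopriceobs}. Fix any $p \in [0,1]$; since $q_0 = 0 \le p \le 1 = q_n$, choose an index $i$ with $q_{i-1} \le p \le q_i$. The heart of the argument is the pointwise inequality
\[
\gft_t(p) \le \gft_t(q_{i-1}) + \gft_t(q_i) + (q_i - q_{i-1}) \qquad \text{for every } t.
\]

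To prove it, note that $\gft_t(p) = 0$ unless $s_t \le p \le b_t$, so assume the latter and split on the position of $b_t$. If $q_i \le b_t$, then $s_t \le p \le q_i \le b_t$, hence $\gft_t(q_i) = b_t - s_t = \gft_t(p)$. If instead $q_i > b_t$, then $q_{i-1} \le p \le b_t < q_i$; now either $s_t \le q_{i-1}$, in which case $s_t \le q_{i-1} \le p \le b_t$ gives $\gft_t(q_{i-1}) = b_t - s_t = \gft_t(p)$, or $s_t > q_{i-1}$, in which case $q_{i-1} < s_t \le b_t < q_i$ forces $\gft_t(p) = b_t - s_t < q_i - q_{i-1}$. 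Since $\gft_t \ge 0$, the displayed inequality holds in all three cases. This case analysis is the only delicate point: whatever ``leaks'' strictly between two consecutive grid points is captured by one mesh term, and which of the two neighbours is the useful one depends on whether $b_t$ lies above or below $p$ — which is exactly why the factor $2$ is unavoidable in this approach.

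Summing over $t = 1, \dots, T$ and bounding $q_i - q_{i-1} \le \delta(Q)$ yields $\sum_t \gft_t(p) \le \sum_t \gft_t(q_{i-1}) + \sum_t \gft_t(q_i) + \delta(Q) T \le 2 \max_{q \in Q} \sum_t \gft_t(q) + \delta(Q) T$, which is the first claimed inequality; since it holds for every $p$, it holds after taking $\max_p$ on the left. For the regret statement, I would use \Cref{obs:twopriceobs} to observe that the best pair of prices in hindsight is no better than the best single price, i.e. $\max_{p,q \in [0,1]^2} \sum_t \gft_t(p,q) = \max_{p \in [0,1]} \sum_t \gft_t(p)$ (the reverse direction is trivial by taking $p = q$, and $\gft_t(p,q) = 0$ whenever $p > q$). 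Substituting the discretization bound into $R^2_T(\A) = \sup_\S\bigl\{\max_{p,q} \sum_t \gft_t(p,q) - 2\sum_t \E{\gft_t(p_t,q_t)}\bigr\}$, then pulling the factor $2$ out of the supremum and leaving the additive $\delta(Q)T$ aside, gives exactly \eqref{eq:discretization_regret}. No step is a genuine obstacle; the entire content is the three-way case analysis establishing the pointwise inequality.
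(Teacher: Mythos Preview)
Your proof is correct and follows essentially the same approach as the paper: establish the per-round inequality $\gft_t(p) \le \gft_t(q_{i-1}) + \gft_t(q_i) + \delta(Q)$ by a short case analysis on whether $[s_t,b_t]$ meets a neighbouring grid point, sum over $t$, and then plug into the definition of $R^2_T$. The only cosmetic differences are that the paper branches on whether $[s_t,b_t]\subseteq(q_{i-1},q_i)$ versus $[s_t,b_t]\cap\{q_{i-1},q_i\}\neq\emptyset$ (rather than on the position of $b_t$ and then $s_t$), and it derives the regret bound by the explicit add-and-subtract of $2\max_{q\in Q}\sum_t\gft_t(q)$ rather than invoking \Cref{obs:twopriceobs}; both routes are equivalent.
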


    \begin{proof}
        Fix any sequence of valuations $\S$ and let $p^{\star}$ be the best fixed price on for that sequence: $
            p^{\star} \in \argmax_{p \in [0,1]} \sum_{t=1}^T \gft_t(p).$
        If $p^{\star} \in Q$, then there is nothing to prove; alternatively let $\qplus$ and $\qminus$ be the consecutive prices on the grid such that $p^{\star} \in [\qminus,\qplus]$. For any time step $t$ such that $\gft_t(p^{\star})>0$, either $p^{\star} \in [s_t,b_t] \subseteq [\qminus,\qplus]$, in which case 
        \[
        \gft_t(p^{\star}) \new{=} (b_t - s_t) \le (\qplus-\qminus) \le \delta(Q),
        \]
        or $[s_t,b_t] \cap \{\qplus, \qminus\} \neq \emptyset$, so that $\gft_t(p^{\star}) = \max\{\gft_t(\qplus), \gft_t(\qminus)\}$.
        All in all, we have that, for each time $t$, the following inequality holds:
        \[
            GFT_t(p^{\star}) \le GFT_t(\qplus) + GFT_t(\qminus) + \delta(Q)
        \]
        Summing up over all times $t$ we get:
        \begin{align}
        \nonumber
            \sum_{t=1}^T \gft_t(p^{\star}) &\le \sum_{t=1}^T GFT_t(\qplus) + \sum_{t=1}^T GFT_t(\qminus) + \delta(Q) T\\
            \label{eq:discretization}
            &\le 2 \cdot \max_{q \in Q} \sum_{t=1}^T \gft_t(q) + \delta(Q)T.
        \end{align}
        Focus now on the second part of the claim and fix any learning algorithm $\A$, we have:
        \begin{align*}
            R^2_T(\A) =& \sup_{\S}\left\{\sum_{t=1}^T GFT_t(p^{\star}) - 2 \cdot \sum_{t=1}^T \E{GFT_t(p_t,q_t)}\right\}\\
            \le& \sup_{\S}\left\{\sum_{t=1}^T GFT_t(p^{\star}) - 2 \cdot \max_{q \in Q} \sum_{t=1}^T GFT_t(q) \right\} \\
            &+ \sup_{\S}\left\{2 \max_{q \in Q} \sum_{t=1}^T GFT_t(q) - 2  \sum_{t=1}^T \E{GFT_t(p_t,q_t)}\right\}\\
            \le& T \cdot  \delta(Q) + 2 \sup_{\S}\left\{\max_{q \in Q} \sum_{t=1}^T GFT_t(q) - \sum_{t=1}^T \E{GFT_t(p_t,q_t)}\right\},
        \end{align*}
        where the last \new{passage} follows from \new{Inequality~\ref{eq:discretization}} (that holds for all sequences).
    \end{proof}

    Before moving to the next section, we spend some words to compare our discretization result with the one in \citet{Nicolo21} (Second decomposition Lemma). There the authors exploit the stochastic nature of the valuations to argue that $\E{\gft_t(\cdot)}$ is Lipschitz, under some regularity assumptions on the random distributions. We study the adversarial model, thus we cannot use this ``smoothing'' procedure; this is why we lose an extra multiplicative factor of $2$.

\section{Full Feedback}\label{sec:full}

    In this section we study the full feedback model, where the learner receives as feedback both seller and buyer valuations after posting a single price (see Observation \ref{obs:twopriceobs}). The learner can thus evaluate $\gft_t(p)$ for all $p \in [0,1]$, independently by the price posted. Even with this very rich feedback we show that the impossibility result from \citet{Nicolo21}, {\sl i.e.}, no learning algorithm achieves sublinear ($1$-)regret in the sequential bilateral trade problem, can be extended to hold for  $\alpha$-regret for all $\alpha \in [1,2)$. 
    
    To prove this result, formalized in Theorem \ref{thm:lower-full-2-eps}, we use Yao's Minimax Theorem: a randomized family of valuations sequences is constructed, with the property that any deterministic learner would suffer, on average, linear $2-\eps$ regret against it. The detailed proof is provided below, but we sketch here the main ideas. Specifically, any valuations sequence from the randomized family consists of $(sell,buy)$ prices that have the form $(0,b_i)$ or $(s_i,1)$, for some carefully designed sequences $\{s_i\}_i$ and $\{b_i\}_i$. These sequences are generated iteratively in a way such that all realized $[sell,buy]$ segments overlap and the next segment is chosen at random among two disjoint options $(0,b_i)$ or $(s_i,1)$. Since all realized $[sell,buy]$ segments overlap, there is at least one price in the intersection of all intervals: this is the optimal fixed price in hindsight. Conversely, at each time step no learner can post a price that guarantees a trade with probability greater than $\nicefrac 12$, thus yielding the lower bound.

\begin{theorem}[Lower bound on $(2-\eps)$-regret]
\label{thm:lower-full-2-eps}
    In the full-feedback model, for all $\eps \in (0,1]$ and horizons $T$, the minimax $(2-\eps)$-regret satisfies $
        R_T^{2-\eps,\star} 
        \ge \frac 18 \eps T.$
\end{theorem}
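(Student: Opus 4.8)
The plan is to apply Yao's Minimax Theorem in the form recalled above: it suffices to produce one distribution over valuation sequences, oblivious to the learner, such that \emph{every} deterministic learner incurs expected $(2-\eps)$-regret at least $\tfrac18\eps T$ against it.

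The instance I would build keeps a shrinking ``overlap interval'' $[\ell_t,r_t]$ near $\tfrac12$. Fix a small width $w_0$ (of order $\eps$) and a separation $\eta>0$ that is exponentially small in $T$, and let $[\ell_0,r_0]=[\tfrac12-\tfrac{w_0}{2},\tfrac12+\tfrac{w_0}{2}]$. At step $t$, with $c_{t-1}$ the midpoint of $[\ell_{t-1},r_{t-1}]$, flip a fair coin: on heads set $(s_t,b_t)=(0,\,c_{t-1}-\eta)$ and put $[\ell_t,r_t]=[\ell_{t-1},c_{t-1}-\eta]$; on tails set $(s_t,b_t)=(c_{t-1}+\eta,\,1)$ and put $[\ell_t,r_t]=[c_{t-1}+\eta,r_{t-1}]$. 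This construction has three features: (a) the overlaps are nested and each realized interval $[s_t,b_t]$ contains the final overlap $[\ell_T,r_T]$; (b) the two coin outcomes are \emph{disjoint} intervals, separated by a gap of width $2\eta$ around $c_{t-1}$; (c) the realized interval has length $c_{t-1}-\eta$ or $1-c_{t-1}-\eta$ with equal probability, hence conditional mean $\tfrac12-\eta$ given the past. The one thing that needs care is that these coexist: the width obeys $w_t=w_{t-1}/2-\eta$, so $\eta\le w_0/2^{T+1}$ keeps every overlap nonempty; and each midpoint moves by at most $w_{t-1}/4+\eta/2$ per step, so its total drift from $\tfrac12$ is below $w_0$ and every $c_t$ (hence every realized interval length) stays within $w_0+\eta$ of $\tfrac12$.

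With the instance in hand, bounding the two terms of the $2$-regret is short. Benchmark: pick any $p^\star\in[\ell_T,r_T]$; by (a) it lies in $[s_t,b_t]$ for all $t$, so $\max_{p,q}\sum_t\gft_t(p,q)\ge\sum_t\gft_t(p^\star)=\sum_t(b_t-s_t)$, which by (c) has expectation $T(\tfrac12-\eta)$. Learner: the feedback through time $t-1$ reveals $[\ell_{t-1},r_{t-1}]$ and hence both candidate intervals; since those are disjoint, no pair $(p_t,q_t)$ with $p_t\le q_t$ can trade under both coin outcomes (that would force $c_{t-1}+\eta\le p_t\le q_t\le c_{t-1}-\eta$), so the learner trades with probability $\le\tfrac12$ and, when it trades, earns at most $\max\{c_{t-1}-\eta,\,1-c_{t-1}-\eta\}\le\tfrac12+w_0$; hence conditionally on the past $\gft_t(p_t,q_t)$ has expectation at most $\tfrac12(\tfrac12+w_0)$, and taking expectations $\sum_t\E{\gft_t(p_t,q_t)}\le T(\tfrac14+\tfrac{w_0}{2})$.

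Combining, every deterministic learner's expected $(2-\eps)$-regret against the instance is at least $T(\tfrac12-\eta)-(2-\eps)T(\tfrac14+\tfrac{w_0}{2})=T\big(\tfrac{\eps}{4}-\eta-\tfrac{(2-\eps)w_0}{2}\big)\ge T\big(\tfrac{\eps}{4}-\eta-w_0\big)$, using $2-\eps\le2$. Taking $w_0=\eps/16$ and $\eta$ small enough for non-emptiness (so in particular $\eta\le\eps/64$) yields $\eta+w_0\le\eps/8$, hence the bound $\tfrac18\eps T$, and Yao's inequality gives $R_T^{2-\eps,\star}\ge\tfrac18\eps T$. The main obstacle — and essentially the only delicate point — is this balancing act: the multiplicative slack $\eps/4$ produced by the factor $2-\eps$ has to absorb both the $O(w_0)$ loss from not being able to keep the optimal price exactly at $\tfrac12$ while still splitting the overlap, and the negligible $O(\eta)$ loss from the tiny gap that makes the learner's two possible trading sides disjoint.
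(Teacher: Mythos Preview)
Your proof is correct and follows essentially the same strategy as the paper: apply Yao's principle to a randomized sequence built by repeatedly halving an overlap interval near $\tfrac12$, so that the two coin outcomes are disjoint intervals whose common intersection always contains an optimal fixed price, forcing any deterministic learner to trade with probability at most $\tfrac12$. The only cosmetic difference is that the paper trisects the overlap at each step (which makes the two candidate intervals disjoint automatically), whereas you bisect and insert an explicit exponentially-small gap $\eta$ to achieve the same disjointness; the resulting arithmetic and the final constant $\tfrac18$ come out the same.
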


\begin{figure}
    \centering
    \includegraphics[width=0.6\linewidth]{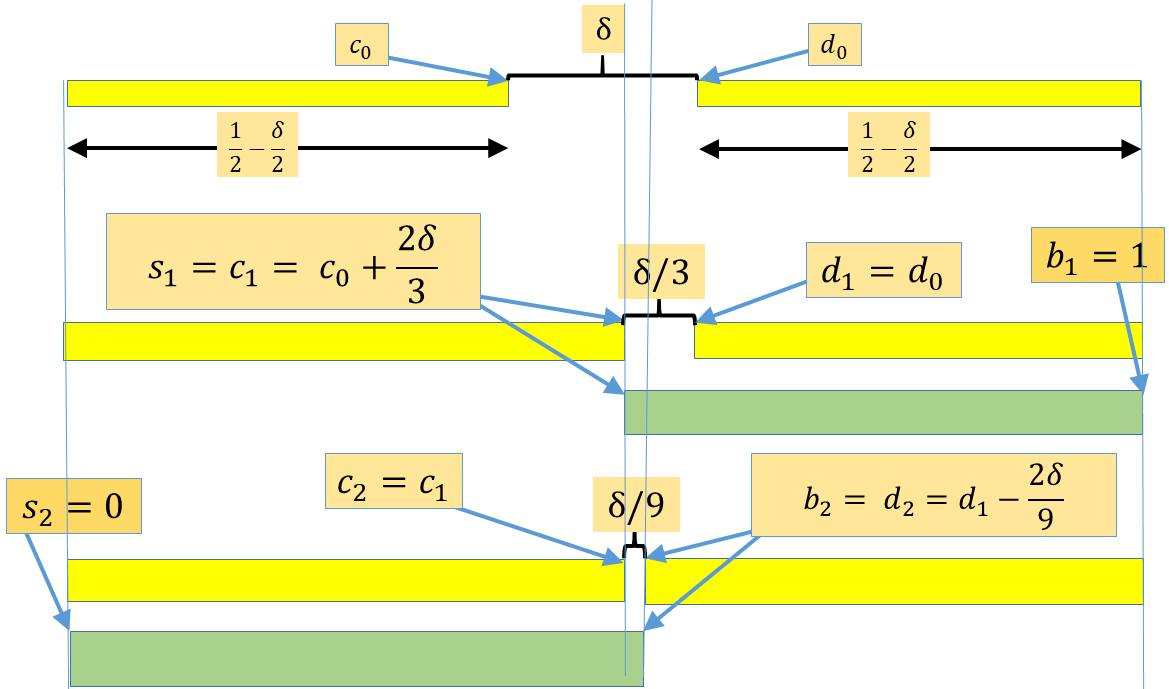}
    \caption{Lower bound construction that ``hides" the optimal price.}
    \label{fig:regret1}
\end{figure}

\begin{proof}
    We prove this lower bound via Yao's Theorem. Fix any $\eps \in (0,1]$, we argue that there exist some constant $c_{\eps}$ and a distribution over sequences such that the $(2-\eps)$-regret of any deterministic learning algorithm $\A$ against it is, on average, at least $c_{\eps}\cdot T,$ for some $\eps$-dependent constant $c_{\eps}$. Our construction is reminescent ---and to some extent simplifies--- the one given in Theorem 4.6 of \citet{Nicolo21}, but presents one main difference: here we construct a family of instances that is oblivious to the learner, whereas in \citet{Nicolo21} they construct a single instance tailored to the learning algorithm $\A$. It is critical for the application of Yao's Theorem that the sequence distribution be independent of the actual algorithm. 
    
    Let $\delta < \nicefrac \eps8 $, the adversary initiates two auxiliary sequences of points $c_0 = \nicefrac 12 - \nicefrac \delta 2$ and $d_0 = \nicefrac 12 + \nicefrac \delta 2$ then, inductively constructs the auxiliary sequences and draws $s_{t+1}$ and $b_{t+1}$ as follows:
    \[
        \begin{cases}
            c_{t+1}:=c_{t}, \ d_{t+1}:=d_{t}-\frac{2\delta}{3^{t}}, \ s_{t+1}:=0, \ b_{t+1}:=d_{t+1},
        &
             \text{ with probability } \nicefrac 12\;
        \\
            c_{t+1}:=c_{t}+\frac{2\delta}{3^{t}}, \ d_{t+1}:=d_{t}, \ s_{t+1} := c_{t+1}, \ b_{t+1}:=1,
        &
            \text{ with probability } \nicefrac 12.
        \end{cases}
    \]
    A quick description of the procedure (we refer to \Cref{fig:regret1} for a pictorial representation): at the beginning of each time step $t+1$ the adversary has two points, $c_t$ and $d_t$, with $d_t - c_t = 3^{-t}\delta $. Then, it chooses uniformly at random between \emph{left} or \emph{right}.
    If \emph{left} is chosen (first line of the construction and $t=2$ in \Cref{fig:regret1}), then $c_{t+1} = c_t$, while $d_{t-1}$ is moved to the first third of the $[c_t,d_t]$ interval: $
        d_{t+1} = d_{t} - \nicefrac{2}{3^t} \delta = c_t + \nicefrac{1}{3^t}\delta,$
    and the adversary posts prices $s_{t+1} = 0$ and $b_{t+1} = d_{t+1}$. If \emph{right} is chosen (second line of the construction and $t=1$ in \Cref{fig:regret1}), then the symmetric event happens: $d_{t+1}=d_t$, $c_{t+1}$ moves to the second third of $[c_t,d_t]$ and the adversary posts $s_{t+1} = c_{t+1}$ and $b_{t+1} = 1$. 
    
    At each time step the two possible realizations (for a fixed past) of the $[s_{t}, b_{t}]$ intervals are disjoint: it implies that {\em any price} the learner posts results in a trade with probability (over the randomness of the adversary) of at most $\nicefrac 12$.
    As we are in a full feedback scenario, there is no point for the learner to post two prices, so we assume that $\A$ posts a single price. By construction, we also have that $(b_t - s_t) \le \nicefrac{1}2 + \nicefrac\delta 2 $ for all realizations. Thus the total expected gain from trade achieved by any deterministic learning algorithm is:
    \begin{align}
    \label{eq:bound_alg}
        \E{\sum_{t=1}^T \gft_t(p_t)} \le \sum_{t=1}^T \frac{1}{2} \left( \frac{1}2 + \frac\delta 2\right) = \frac{T}{2}\left( \frac{1}2 + \frac\delta 2\right)
    \end{align}

    We move now our attention to the best fixed price in hindsight. For any realization of the randomness used in the construction of the sequence, $[s_{t}, b_{t}]$  intervals have a non-empty intersection; let $p^{\star}$ be some price in this intersection. Moreover, at all time steps $t$ it holds that $(b_t-s_t) \ge \nicefrac{1}2 - \nicefrac\delta 2 $. All in all, this gives a simple lower bound on the total gain from trade of the best price in hindsight that holds for any realization of the valuations sequence:
    \begin{equation}
        \label{eq:bound_opt}
        \max_{p \in [0,1]} \sum_{t=1}^T \gft_t(p) = \sum_{t=1}^T \gft_t(p^{\star}) \ge \frac T2  \left(1-\delta \right).
    \end{equation}
    Combining these two bounds (\new{Inequalities~\ref{eq:bound_alg} and \ref{eq:bound_opt}}) we can derive the desired bound on the $2-\eps$ regret via Yao's Theorem:
    \[
        R^{2-\eps}_T(\A) \ge \frac T2  \left(1-\delta \right) - (2-\eps)\frac{T}{2}\left( \frac{1}2 + \frac\delta 2\right) = \frac T4 (\eps + \eps \delta - 4\delta) \ge \frac 18 \eps T. \qedhere
     \]
\end{proof}

If we look for positive results, we note that there is a clear connection of our problem in the full feedback and the prediction with experts framework \citep{Nicolo06}. In particular, if we simplify the task of the learner and ask it to be competitive against {\em the best price in a finite grid}, we can use classical results on prediction with experts as a black box. Combining this fact with our discretization result (\Cref{claim:discretization}), we can show an $\tilde O(\sqrt{T})$ upper bound on the $2$-regret.
    \begin{theorem}[Upper bound on $2$-regret with full feedback]
    \label{thm:upper-full}
        In the full-feedback setting, there exists a learning algorithm $\A$ whose $2$-regret, for $T$ large enough, respects $
            R_T^2(\A) \le 5 \cdot \sqrt{T \log T}.$
    \end{theorem}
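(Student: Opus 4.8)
The plan is to reduce the continuous pricing problem to a finite experts problem using the discretization bound of \Cref{claim:discretization}, and then plug in a standard regret guarantee for prediction with expert advice. First I would fix a uniform grid $Q = \{0, \tfrac1n, \tfrac2n, \dots, 1\}$ of $n+1$ prices, so that $\delta(Q) = 1/n$. The crucial observation is that in the full feedback model, after posting any single price $p_t$ the learner observes the pair $(s_t, b_t)$, and can therefore compute $\gft_t(q)$ for \emph{all} grid prices $q$ at once: this is exactly the full-information experts setting with $n+1$ experts and gains in $[0,1]$. (By \Cref{obs:twopriceobs} there is no reason to post two distinct prices in this model, so we let $\A$ post a single price.)

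Second, I would let $\A$ run the exponential weights (Hedge) algorithm over these $n+1$ experts with the optimally tuned learning rate, posting grid price $q$ with probability equal to its current weight $w_t(q)$; then $\E{\gft_t(p_t)} = \sum_{q\in Q} w_t(q)\,\gft_t(q)$ is the expected gain of Hedge, and the textbook bound gives
\[
    \max_{q\in Q}\sum_{t=1}^T \gft_t(q) - \sum_{t=1}^T \E{\gft_t(p_t)} \le 2\sqrt{T\log(n+1)}.
\]
Substituting this into inequality \eqref{eq:discretization_regret} of \Cref{claim:discretization} yields $R_T^2(\A) \le 4\sqrt{T\log(n+1)} + T/n$.

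Third, I would optimize the resolution by setting $n = \lceil \sqrt T\,\rceil$, so that $T/n \le \sqrt T$ and $\log(n+1) \le \log T$ once $T$ exceeds a modest constant. This gives $R_T^2(\A) \le 4\sqrt{T\log T} + \sqrt T \le 5\sqrt{T\log T}$ for all $T$ large enough that $\sqrt T \le \sqrt{T\log T}$, which is the claimed bound.

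I do not anticipate a genuine difficulty here: the discontinuity of $\gft$ --- which blocks the usual Lipschitz/convexity machinery --- has already been handled in \Cref{claim:discretization}, at the price of only a factor $2$ plus the additive discretization error $\delta(Q)\,T$. The two points that require care are (i) confirming that full feedback really does turn every grid price into an ``observable'' expert, so that the Hedge regret bound applies verbatim as a black box, and (ii) threading the constants through the factor-$2$ blow-up in \eqref{eq:discretization_regret}, the $\Theta(\sqrt{T\log n})$ experts regret, and the choice of $n$, so as to stay below the stated constant $5$.
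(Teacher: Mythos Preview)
Your proposal is correct and follows essentially the same approach as the paper: discretize using \Cref{claim:discretization}, run Hedge/Multiplicative Weights over the grid as a full-information experts problem, and thread the constants through \eqref{eq:discretization_regret}. The only difference is the grid resolution: the paper takes $n=T$ (so $\delta(Q)\,T=1$), while you take $n=\lceil\sqrt{T}\,\rceil$ (so $\delta(Q)\,T\le\sqrt{T}$ but $\log(n+1)\approx\tfrac12\log T$); both choices land comfortably under $5\sqrt{T\log T}$.
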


    \begin{proof}
    Consider a grid of prices $Q$ composed by $T+1$ equally spaced points: $q_i = \nicefrac iT$ for $i = 0, 1, \dots, T$ and choose your favourite prediction with experts learning algorithm, e.g., Multiplicative Weights \citep{AroraHK12}. Given the full feedback regime, and the fact that the grid is finite, we can run expert algorithm using as experts the points on the grid. Typically, the best experts learning algorithm exhibit a bound on the regret $O(\sqrt{T \log K})$, that becomes $O(\sqrt{T \log T})$ in our case since we have $T+1$ experts. If we use the Multiplicative Weights algorithm against the best fixed price on the grid $Q$ with $\eta = \sqrt{\nicefrac{\log T}{T}}$, we get by Theorem 2.5 of \citet{AroraHK12}:
    \[
        \sup_\S\left\{\max_{q \in Q} \sum_{t=1}^T \gft_t(q) - \sum_{t=1}^T \E{\gft_t(p_t)} \right\} \le 2 \sqrt{T \log(T+1)}
    \]
    Plugging this bound in \Cref{claim:discretization}, we get the desired order of regret. 
    \begin{align*}
         R^2_T(\A) &\le 2\sup_\S\left\{\max_{q \in Q} \sum_{t=1}^T \gft_t(q) - \sum_{t=1}^T \E{\gft_t(p_t)} \right\}+ \delta(Q)\cdot T\\
         &\le 4 \sqrt{T \log(T+1)} + 1 \le 5 \sqrt{T \log(T)}.
    \end{align*}
    The first inequality is just a restatement of \new{Inequality~\ref{eq:discretization}} from \Cref{claim:discretization}. The second inequality follows by combining the bound on the regret of multiplicative weight and the fact that the grid is equally spaced, thus $\delta(Q) = \nicefrac 1T.$
\end{proof}

    We conclude the analysis of repeated bilateral trade in the full feedback model with a lower bound that shows that the previous result is tight up to a logarithmic factor: the minimax $2$-regret of the full feedback problem is $\tilde \Theta(\sqrt{T})$.  The proof uses once again Yao's Theorem and consists in constructing a randomized family of sequences such that any deterministic learning algorithm suffers, in expectation, a $\Omega(\sqrt{T})$ $2$-regret. The detailed construction is described below and it involves the careful combination of two scaled copies of the hard sequences used in the proof of \Cref{thm:lower-full-2-eps}. As a technical ingredient, we need the following property of Random walks. 
    \begin{lemma}[Property of Random Walks]
    \label{lem:random}
        Let $S_n$ be a symmetric random walk on the line after $n$ steps\footnote{\new{Formally, the random walk is recursively defined as follows: $S_0=0$, and $S_{n}$ is either  $S_{n-1} + 1$ or $S_{n-1} - 1$, with the same probability.}}. Then, for $n$ large enough, it holds that $
          \E{|S_n|} \ge \frac 23 \sqrt{n}.$
    \end{lemma}
    \begin{proof}
    It is well known that the expected distance of a random walk from the origin grows like $\Theta(\sqrt{n})$. Formally, the following asymptotic result holds \citep[e.g.,][]{Palacios08}
        \[
            \lim_{n \to \infty}\tfrac{\E{|S_n|}}{\sqrt n} = \sqrt{\tfrac{2}{\pi}}.
        \]
    Observe that $\sqrt{\nicefrac{2}{\pi}} > \nicefrac 23$, thus there exists a finite $n_0$ such that $\E{|S_n|} \ge \nicefrac 23 \sqrt{n}$ for all $n \ge n_0.$
    \end{proof}

    \begin{theorem}[Lower bound on $2$-regret with full feedback]
    \label{thm:lower-full}
        In the full-feedback model, for all horizons $T$ large enough, the minimax $2$-regret satisfies $ R_T^{2,\star} \ge \frac 1{13}\sqrt{T}. $
    \end{theorem}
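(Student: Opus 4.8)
The plan is to invoke Yao's minimax theorem and exhibit an oblivious random family of valuation sequences against which \emph{every} deterministic learner incurs expected $2$-regret at least $\tfrac1{13}\sqrt T$. The family is built from \emph{two rescaled copies} of the hard instance of \Cref{thm:lower-full-2-eps}: copy~$1$ is rescaled so that all of its valuations (hence all of its realized segments $[s_t,b_t]$) lie in $[0,\tfrac12)$, and copy~$2$ is rescaled to lie in $(\tfrac12,1]$ (with a negligible gap around $\tfrac12$ so that no price can trade with both copies). After rescaling, each copy keeps the two features we rely on: (i) all of its realized segments contain a common ``hidden'' price --- the limit of its nested intervals --- each such segment having value $\tfrac14-O(\delta)$; and (ii) at every step of the copy the two possible realized segments are disjoint, so no posted price trades with that copy with probability more than $\tfrac12$. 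Here $\delta$ is the scale parameter of the underlying construction, which we will eventually take as small as $\delta=1/T$.

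The random sequence is obtained by \emph{interleaving} the two copies: for each $t=1,\dots,T$ draw an independent fair coin $X_t\in\{1,2\}$ and let the $t$-th arriving pair be the next unused step of copy $X_t$. Write $N_1,N_2$ for the numbers of steps handed to the two copies, so $N_1+N_2=T$ and $S_T:=N_1-N_2=\sum_{t=1}^T\xi_t$ with $\xi_t=\pm1$ fair is a symmetric random walk of length $T$ started at $0$; in particular $\max(N_1,N_2)=\tfrac12(T+|S_T|)$. For the benchmark, the hidden price of copy~$1$ trades on every one of its $N_1$ steps, for a total of at least $(\tfrac14-O(\delta))N_1$, and symmetrically for copy~$2$; taking the better of the two, $\max_{p}\sum_t\gft_t(p)\ge(\tfrac14-O(\delta))\cdot\tfrac12(T+|S_T|)$, and \Cref{lem:random} gives $\E{|S_T|}\ge\tfrac23\sqrt T$ for $T$ large, so $\E{\max_{p}\sum_t\gft_t(p)}\ge(\tfrac14-O(\delta))(\tfrac T2+\tfrac13\sqrt T)$.

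For the learner, fix a deterministic algorithm. Since full feedback reveals each $(s_t,b_t)$, after $t-1$ steps the learner knows exactly how many steps each copy has received and the current nested interval of each copy; its price $p_t$ is a deterministic function of this history. If $p_t\in[0,\tfrac12)$ it can only belong to one of the two (disjoint) current alternatives of copy~$1$ and to neither alternative of copy~$2$, so the realized segment contains $p_t$ with probability at most $\tfrac12\cdot\tfrac12=\tfrac14$ over the fresh coins $X_t$ and copy~$1$'s next left/right coin; the case $p_t\in(\tfrac12,1]$ is symmetric and $p_t=\tfrac12$ never trades. Hence the per-step trade probability is at most $\tfrac14$ and a trade is worth at most $\tfrac14+O(\delta)$, giving $\E{\sum_t\gft_t(p_t)}\le T\cdot\tfrac14\cdot(\tfrac14+O(\delta))$. (Posting two prices does not help: by \Cref{obs:twopriceobs} a pair spanning both halves never trades, and a pair inside one half is dominated by a single price there.)

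Putting the two bounds together, the expected $2$-regret is at least
$(\tfrac14-O(\delta))(\tfrac T2+\tfrac13\sqrt T)-2\cdot T\cdot\tfrac14(\tfrac14+O(\delta))=\tfrac1{12}\sqrt T-O(\delta T)$;
taking $\delta=1/T$ makes the error $O(1)$, so for $T$ large enough $\tfrac1{12}\sqrt T-O(1)\ge\tfrac1{13}\sqrt T$, and Yao's theorem yields $R_T^{2,\star}\ge\tfrac1{13}\sqrt T$. The step I expect to be most delicate is the bookkeeping behind (i)--(ii): one must check that the rescaled copies genuinely inherit the ``single common hidden point'' and ``disjoint alternatives'' properties (keeping the two halves strictly disjoint), and track the $O(\delta)$ losses tightly enough that the leading constant $\tfrac1{12}$ survives down to the claimed $\tfrac1{13}$.
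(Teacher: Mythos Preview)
Your proposal is correct and takes essentially the same approach as the paper: Yao's principle applied to two rescaled copies of the \Cref{thm:lower-full-2-eps} instance placed in the left and right halves of $[0,1]$, a per-step trade probability of at most $1/4$, a benchmark lower bound of $(\tfrac14-O(\delta))\max(N_1,N_2)$, and \Cref{lem:random} to extract the $\sqrt T$ term, with $\delta=1/T$ to kill the error. The one cosmetic difference is the interleaving mechanic: you advance only the selected copy at step $t$, whereas the paper lets both copies evolve in parallel at every step and shows only the one chosen by the coin; your variant makes the conditioning slightly cleaner (the learner knows the exact current nested interval of each copy), but both yield the same $1/4$ trade bound and the same $\tfrac12(T+|S_T|)$ benchmark count, so the final arithmetic is identical.
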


  \begin{figure}
    \centering
    \includegraphics[scale=0.35]{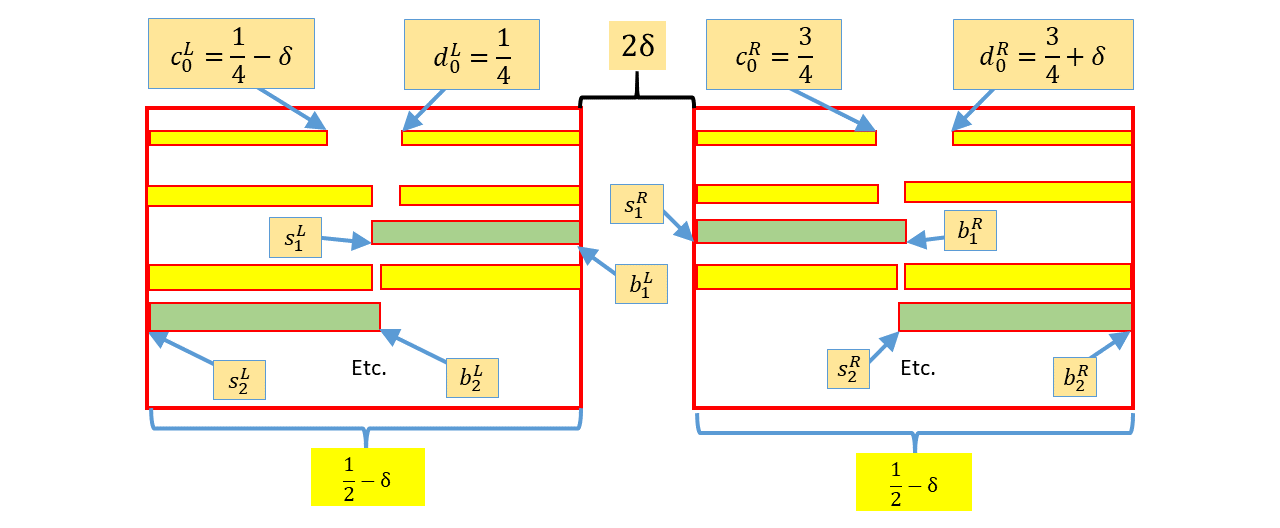}
    \caption{The proof of Theorem \ref{thm:lower-full} makes use of two (appropriately scaled and shifted) copies of the lower bound from Theorem \ref{thm:lower-full-2-eps} (See Figure \ref{fig:regret1}).  
    In this example the left hand copy choose right and then left, while the right hand copy happened to choose left and then right. The (seller,buyer) bids at time $t$ are then chosen independently at random from $(s_t^L,b_t^L)$ and $(s_t^R,b_t^R)$.
    }
    \label{fig:2copies}
\end{figure}

\begin{proof}
    We show that there exists a distribution over valuations sequences such that any deterministic learning algorithm $\A$ achieves, on average, at least a $2$-regret of $\nicefrac 1{13}\cdot \sqrt{T}.$ This is enough to conclude the proof via Yao's Theorem. It may be helpful to consider Figure \ref{fig:2copies} to visualize this construction. Fix some small $\delta $ to be set later and consider two scaled copies of the lower bound construction from \Cref{thm:lower-full-2-eps}, one in $[0,\nicefrac 12-\delta]$ and the other is $[\nicefrac 12+  \delta, 1]$. Starting from $
        c_0^L = \nicefrac{1}{4} - \delta, \ d_0^L = \nicefrac{1}{4},  \  c_0^R = \nicefrac{3}{4},  \ d_0^R = \nicefrac{3}{4} + \delta,$
    the left $L$ and right $R$ pair of sequences evolve over time and generate two distinct sequences of valuations: $(s^L_t, b^L_t) \subseteq [0,\nicefrac 12-\delta]$ and $(s^R_t, b^R_t) \subseteq [\nicefrac 12+ \delta, 1]$. The actual sequence of valuations presented to the learner is based on these two sequences as follows: at each time step, the adversary tosses a fair coin, if it is a head, then it selects $(s_t,b_t) = (s^L_t, b^L_t)$, otherwise $(s_t,b_t) = (s^R_t, b^R_t)$. Observe that there are two independent sources of randomness in the adversary construction: the one responsible of the generation of the auxiliary sequences and the one toss of the left-right coin. 
    We give now an upper bound on the expected performance of the learner at each time step $t$. Reasoning similarly to what we did in \Cref{thm:lower-full-2-eps}, there are four disjoint intervals of the $[0,1]$ interval where a price could cause a trade, and each one of them is the one chosen by the adversary with probability $\nicefrac 14$ ($\nicefrac 12$ given by the left-right coin and another $\nicefrac 12$, independently, by the evolution of the sequences $(s_t^L,b_t^L)$ and $(s_t^R,b_t^R)$). All in all, this implies that for any price the algorithm posts, it results in a trade with probability at most $\nicefrac 14.$ Moreover, we have the property that $(b_t - s_t) \le \nicefrac 14$ at all times and for all realizations, therefore we have proved the following upper bound on the total expected gain from trade achieved by any deterministic learning algorithm: 
    \begin{equation}
        \label{eq:upper_bound_alg}
        \sum_{t=1}^T \E{\gft_t(p_t)} \le \frac T{16}.
    \end{equation}
    We move now our attention to lower bounding the gain from trade of the best price in hindsight.
    Consider any realization of the sequence of coin tosses, we know that there exist two prices $p_L^{\star}$ and $p_R^{\star}$ such that $p_L^{\star}$ guarantees a trade in every time step where the result of the left-right coin gives left, and $p_R^{\star}$ does the same when the coin gives right. In addition, we know that $(b_t-s_t) \ge \nicefrac{1}4 - \delta.$ All in all we have that, for all realizations of the randomness,
    \[
        \gft_t(p_L^{\star}) + \gft_t(p_R^{\star}) \ge \frac{1}4 - \delta.
    \]
At this point, fix the randomness of the auxiliary sequences and focus on the the one given by the coin tosses, and call $X_t$ the indicator random variable of observing left from the coin at time $t$. 
We have:
\begin{align*}
    \mathbb E& \left[\max_{p \in [0,1]} \sum_{t=1}^T \gft_t(p)\right] \\
    &=
    \E{\max_{p \in \{p_L^{\star},p_R^{\star}\}} \sum_{t=1}^T \gft_t(p)} \\
    &\ge \left(\frac 14- \delta \right)\E{\max\left\{\sum_{t=1}^T \ind{p_L^{\star}\in [s_t,b_t]}, \sum_{t=1}^T\ind{p_R^{\star}\in [s_t,b_t]}\right\}}\\
    &= \left(\frac 14- \delta \right)\E{\max\left\{\sum_{t=1}^T X_t, T -\sum_{t=1}^T X_t\right\}}\\
    &= \left(\frac 14- \delta \right)\E{\frac T2 + \frac 12 \max\left\{2\sum_{t=1}^T X_t - T, T -2\sum_{t=1}^T X_t\right\}}\\
    &= \left(\frac 14- \delta \right)\left(\frac T2 + \frac 12 \E{|S_T|}\right)\ge \left(\frac 14- \delta \right)\left(\frac T2 + {\frac{\sqrt{T}}{3}}\right).
\end{align*}
\new{The last inequality follows from \Cref{lem:random}; consider infact the random process $S_n$ which counts the number of ``left'' outcomes, minus the number of  ``right'' outcomes during the first $n$ steps of the sequence. $S_n$ is a random walk as in the statement of \Cref{lem:random}, and can be rewritten in terms of $X$ variables as follows: 
\[
    S_n = \underbrace{\sum_{t=1}^n X_t}_{``left"} - \underbrace{\left(n - \sum_{t=1}^n X_t\right)}_{``right"} = 2 \sum_{t=1}^n X_t - n.
\]
Now, the expected distance from the origin of $S_n$ is equal to the maximum between $S_n$ and $-S_n$, so that if we look at $S_T$ we get exactly that:
\[
    |S_T| = \max\left\{ 2 \sum_{t=1}^T X_t - T, T - 2 \sum_{t=1}^T X_t\right\}.
\]}
Since the previous bound holds for any realization of the auxiliary sequences, it holds also in expectation over all the randomness. 
We can finally combine the previous inequality with \new{Inequality~\ref{eq:upper_bound_alg}} and conclude by Yao's Theorem that
\begin{align*}
    R_T^{2,\star} &\ge  \E{\max_{p \in [0,1]} \sum_{t=1}^T \gft_t(p) - 2\sum_{t=1}^T \gft_t(p_t)}
    \\
    &\ge \left(\frac 14- \delta \right)\left(\frac T2 + {\frac{\sqrt{T}}{3}}\right) - \frac T8 \\
    &\ge \frac{1}{12}\sqrt{T} - \frac \delta 2 T - \delta \frac{\sqrt{T}}{3} \ge \frac{1}{13}\sqrt{T}
\end{align*}
where in the last inequality we took $\delta$ small enough, e.g., $\delta = \nicefrac 1T.$
\end{proof}

\section{Partial Feedback}\label{sec:partial}
    
    In this section, we study the partial feedback models where the learner receives very limited information on the realizations of the gain from trade. Specifically, one or two bits that describe the relative positions of the prices proposed to the agents and their valuations.

\subsection{Lower bound on \texorpdfstring{$\alpha$}\ -regret posting single price given two-bit feedback}
    
    Consider a learner that is constrained to post one single price at every iteration; the same to both seller and buyer. For this class of algorithms we show a very strong impossibility result, namely that for any constant $\alpha$, there exists no algorithm achieving sublinear $\alpha$-regret. We prove this in the two-bit feedback model and thus it trivially holds also if given one-bit feedback. The core of the lower bound construction resides in the possibility for the adversary to hide a {\em large} interval between many shorter ones; a learner posting only one price will not be able to locate it using partial feedback (which consists in just {\em counting} the number of intervals on the left and on the right).
    
\begin{theorem}[Lower bound on $\alpha$-regret posting single price, two-bit feedback]    
\label{thm:lower-two-bits-one-price}
    In the two-bit feedback model where the learner is allowed to post one single price, for all horizons $T \in \N$ and any constant $\alpha > 1$, the minimax $\alpha$-regret satisfies $R_T^{\alpha,\star} \ge \frac{1}{128 \alpha^3} T.$
\end{theorem}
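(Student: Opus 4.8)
The plan is to prove the bound via Yao's minimax principle (the easy direction recalled above): I will construct a distribution over valuation sequences on which every \emph{deterministic} learner that is restricted to posting one price per round has expected $\alpha$-regret at least $\tfrac{1}{128\alpha^3}T$, which lower bounds $R_T^{\alpha,\star}$.

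The random instance is built around a single ``large'' interval of width $\Theta(1/\alpha)$ --- a trade in it is worth $\Theta(1/\alpha)$ --- hidden among a crowd of ``short'' intervals of vanishing width (carrying essentially no gain from trade). The position of the large interval is the secret: it sits in one of $\Theta(\alpha)$ macroscopic slots, at a microscopic offset inside that slot fixed by a continuous uniform random shift $\rho$, and it is shown on only a $\Theta(1/\alpha)$ fraction of the rounds; the other rounds present short decoy intervals whose locations are chosen so that the left/right ``counting'' information a single price can extract looks the same across all slots. The point --- exactly the difficulty flagged in the technical challenges --- is that with a single posted price $p$ the two-bit feedback $(\ind{s_t\le p},\ind{p\le b_t})$ reveals only whether round $t$'s interval lies entirely below $p$, entirely above $p$, or straddles $p$; over a run this amounts to three running counters, and the decoys together with the random shift are engineered so that their joint law is (essentially) independent of the secret, so the learner can never localize the large interval.

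Concretely I would: (i) fix $m=\Theta(\alpha)$ and specify the macro-slots, the secret slot $k$ and offset $j$, the shift $\rho$, the large interval $L=L(k,j,\rho)$, the decoy distribution, and the per-round law of $(s_t,b_t)$; (ii) lower bound the benchmark: for every realization some fixed price inside $L$ trades on every round $L$ is shown, hence $\max_{p}\sum_{t=1}^T \gft_t(p)\ge\Omega(T/\alpha^2)$; (iii) upper bound the learner: show that, conditioned on the deterministic algorithm, the law of the whole feedback stream is (essentially) the same under every secret $(k,j)$, so at each round $\Pr[p_t\in L]\le O(1/\alpha^2)$ (no better than a uniform guess of slot and offset), whence $\E{\sum_{t=1}^T \gft_t(p_t)}\le O(T/\alpha^3)$ (the short decoys add only $o(T/\alpha^3)$); (iv) combine via Yao: $R_T^{\alpha,\star}\ge \Omega(T/\alpha^2)-\alpha\cdot O(T/\alpha^3)=\Omega(T/\alpha^2)$, and carrying the explicit constants through (ii)--(iii) yields the stated $\tfrac{1}{128\alpha^3}T$.

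I expect step (iii) --- proving that a single price cannot localize $L$ --- to be the main obstacle. The delicate point is that the learner may move its price from round to round and does learn, each round, on which side of its current price the shown interval lies, so one must rule out both a binary search that would pin down the endpoints of $L$ and a detection of $L$ through an anomalously high rate of ``straddle'' feedback; this is precisely where the continuous random shift (rather than a fixed grid) and the symmetric decoys are essential. The clean way to make this rigorous is to exhibit, for any two secrets $(k,j)$ and $(k',j')$, a measure-preserving bijection of the adversary's randomness carrying one conditional family of sequences onto the other while leaving the feedback seen by any fixed deterministic learner unchanged --- so the learner's trade indicator at round $t$ has the same distribution under every secret --- and then average over the uniform secret to obtain the $O(1/\alpha^2)$ per-round hitting probability. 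Designing the decoys and shifts so that such a secret-preserving coupling exists (rather than merely making localization ``slow'') is the subtle part, since any residual per-round leakage would be amplified over the horizon and break the bound for large $T$.
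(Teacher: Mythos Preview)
Your high-level plan is the paper's: Yao's principle, a single ``large'' interval of width $\Theta(1/\alpha)$ hidden in one of $\Theta(\alpha)$ slots among many short decoys, feedback that is indistinguishable across secrets, and a random shift to prevent the learner from pinpointing the grid. So the route is right. Two things in your execution, however, do not line up and would bite you when you try to write it out.

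\textbf{The parameters are inconsistent.} You assert simultaneously that $L$ has width $\Theta(1/\alpha)$, that it is shown on a $\Theta(1/\alpha)$ fraction of rounds, and that $\Pr[p_t\in L]\le O(1/\alpha^2)$ via ``slot and offset''. But if $L$ occupies a full slot there is no offset to guess: a uniform slot guess already gives $\Pr[p_t\in L]=\Theta(1/\alpha)$, and the per-round learner gain from $L$ is then $\Theta(1/\alpha)\cdot\Theta(1/\alpha)\cdot\Theta(1/\alpha)=\Theta(1/\alpha^3)$; multiplied by $\alpha$ this exactly cancels the $\Theta(T/\alpha^2)$ benchmark at the level of orders, so your claimed $\Omega(T/\alpha^2)$ does not follow (and indeed contradicts the stated $T/(128\alpha^3)$ conclusion). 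More importantly, the indistinguishability constraint forces the showing rate of the large interval to equal the showing rate of whatever decoy straddles $p$ in a non-secret slot; with decoys of width $\delta$ tiling each slot, that common rate is $\Theta(\delta)$, not $\Theta(1/\alpha)$. In the paper the parameters that close are $\Delta=1/(2\alpha)$ (slot width) and $\delta=1/(8\alpha^2)$ (decoy width $=$ showing rate), yielding benchmark $\delta\Delta T=\Theta(T/\alpha^3)$ and learner gain $\delta(\Delta^2+\delta)T=\Theta(T/\alpha^4)$, hence $\alpha$-regret $\Theta(T/\alpha^3)$.

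\textbf{The missing device for exact indistinguishability.} You correctly flag that matching the left/right ``counts'' across all secrets is the crux, but you have not named the mechanism, and the naive construction (large interval in slot $i$, short intervals of width $\delta$ in every slot $j\neq i$) fails: the secret slot then contains one item while every other slot contains $\Delta/\delta$ items, so the left-count at a price in slot $j$ depends on whether the secret lies to the left or right of $j$. The paper's fix is to pad the secret slot with \emph{degenerate} pairs $(i\Delta+k\delta,\, i\Delta+k\delta)$ for $k=1,\dots,\Delta/\delta-1$; these have zero gain from trade but contribute to the left/right indicators exactly as the short intervals they replace would. With this padding every $S_i$ has the same cardinality $1/\delta$, and for any off-grid price $p$ the distribution of the two-bit feedback is \emph{identical} across all choices of $i$ --- no coupling argument is needed, one just computes it directly. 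Your measure-preserving-bijection plan would discover this, but it is worth knowing that the clean realization uses zero-width decoys rather than merely ``vanishing width''.
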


\begin{proof}
    In this proof, we construct a randomized family of sequences that are impossible to distinguish using a single price and given two-bit feedback. Furthermore, no deterministic algorithm is capable of achieving good regret against them in expectation. It may be useful to refer to Figure \ref{fig:my_label} for visualization. We first prove the claim under a {\sl ``grid hiding''} assumption that the learning algorithm is disallowed from posting prices in some fixed finite grid (to be defined below). We later  justify the grid hiding assumption by introducing some minor perturbation to the grid.   
    
\begin{figure}
    \centering
    \includegraphics[scale=0.35]{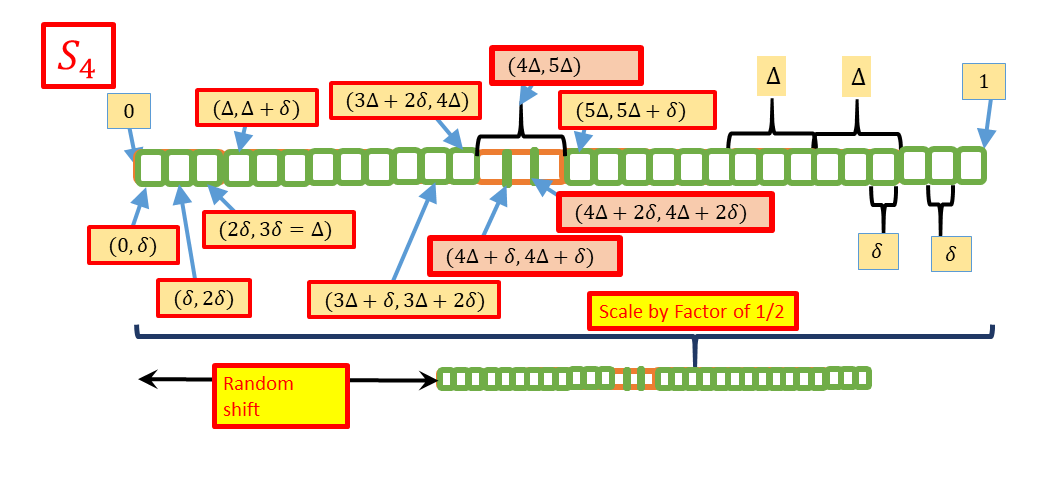}
    \caption{Example of sets used in the lower bound of  \Cref{thm:lower-two-bits-one-price} and how the grid is hidden. This example has $\Delta =\nicefrac 1{10}$, $\delta=\nicefrac 1{\new{3}0}$, so each section of size $\Delta$ is partitioned into 3 sections of size $\delta$. The $(sell,buy)$ pairs in $S_4$ are as described in Equality \ref{eq:SI} (not all such pairs are shown, there are $\nicefrac{1}\delta =30$ such pairs in $S_4$ ). Note that the gain from trade is $\Delta$ if the bids are $(4\Delta,5\Delta)$ and if a price in between is posted. Also note that  that seller and buyer valuations are equal for $(4\Delta+\delta, 4\Delta+\delta)$ and for $(4\Delta+2\delta, 4\Delta+2\delta)$.}
    \label{fig:my_label}
\end{figure}    
    
    Let $\delta$ and $\Delta$ two positive constants, with $1 \ge \Delta > \delta$ to set later such that $\nicefrac 1\Delta$, $\nicefrac 1\delta$ and $\nicefrac \Delta\delta$ are integers. The grid used in the grid hiding assumption is composed by all integral multiples of $\delta$. For each $i$ from $0$ to $\nicefrac 1\Delta-1$, consider the following sets of valuations:
    \begin{align}
    \nonumber
    S_i &= \Big\{\big(i \cdot \Delta, (i+1) \Delta\big)\Big\} \bigcup_{j \neq i}\bigcup_{k = 0}^{\nicefrac \Delta\delta-1} \Big\{\big(j \cdot \Delta + k \cdot \delta, j \cdot \Delta + (k+1)  \delta \big)\Big\}\\ \label{eq:SI}
      &\quad \bigcup_{k = 1}^{\nicefrac \Delta\delta-1} \Big\{\big(i \cdot \Delta + k \cdot \delta, i \cdot \Delta + \new{(k+1)}\cdot  \delta \big)\Big\}
  \end{align}
    The adversary constructs the first family of sequences as follows: to start, it selects uniformly at random $i$ from $0$ to $\nicefrac 1\Delta-1$, then generates the sequence by repeatedly drawing independently and uniformly at random $(s_t,b_t)$ from $S_i$.
    Note that the cardinality of $S_i$ is $N_\delta:=\nicefrac 1\delta$ for all $i$.
    As a first step, we give a lower bound on the expected gain from trade of the best fixed price in hindsight:
     \new{fix any $S_i$} and any price $p^{\star}_i$ in $(i \cdot \Delta + \delta, (i+1) \cdot \Delta)$. \new{Posting $p^{\star}_i$ when $(s_t,b_t)$ is drawn uniformly at random from $S_i$ yields to three outcomes: (i) with probability $\nicefrac{1}{N_\delta}$ the valuations $(i \cdot \Delta, (i+1) \cdot \Delta)$ are drawn, for a gain from trade of $\Delta$, (ii) with probability $\nicefrac{1}{N_\delta}$, valuations of the type $\big(i \cdot \Delta + k \cdot \delta, i \cdot \Delta + (k+1)\cdot  \delta \big)$ are drawn, with $p_i^* \in \big(i \cdot \Delta + k \cdot \delta, i \cdot \Delta + (k+1)\cdot  \delta \big)$, for a gain from trade of $\delta$, (iii) another pair of valuations is drawn, resulting in no trade. All in all, we have the following:} 
    \begin{align}
    \label{eq:lower_2prices_pstar}
        \E{\max_{p \in [0,1]} \sum_{t=1}^T \gft_t(p)} &\ge \E{\sum_{t=1}^T \gft_t(p^{\star}_i)}= \new{\frac{\Delta + \delta}{N_\delta} T \ge T \delta \Delta}.
    \end{align}
    \new{Note, the expectation is with respect to the sequence of $(s_t,b_t)$ drawn uniformly at random from the fixed $S_i.$}
    Since \new{Inequality~\ref{eq:lower_2prices_pstar}} holds for any realization of the initial choice of $S_i$, it also holds in expectation over all the randomness of the adversary, i.e. also over the random choice of $S_i$.
    
    Let us focus now on the expected performance of any deterministic learner $\A.$ The crux of this proof is that any price that is not a multiple of $\delta$ is not able to discriminate between $S_i$ and $S_j$, for any $j \neq i.$ To see this, let $p$ be any price that is not a multiple of $\delta$, there exist unique $j(p) \in \{0,1, \dots \nicefrac 1\Delta-1\}$ and $k(p)\in \{0,1, \dots \nicefrac \Delta\delta-1\}$ such that
    \[
        p \in \Big(j(p) \cdot \Delta + k(p) \cdot \delta\;, \; j(p) \cdot \Delta + (k(p)+1)\cdot  \delta \Big).    
    \]
    The crucial observation is now that {\em regardless of the $S_i$ selected by the adversary}, the random variable $(\new{\ind{s_t \le p},\ind{p \le b_t}})$ follows the same distribution, in particular, we get:
    \[
        (\new{\ind{S_t \le p},\ind{p \le B_t}}) = 
        \begin{cases}
            (1,1) &\text{ with probability } \frac 1{N_\delta}\\ 
            (1,0) &\text{ with probability } \frac{1}{N_\delta} \left(j(p) \frac \Delta \delta + k(p) - 1\right)\\
            (0,1) &\text{ with the remaining probability}
        \end{cases}    
    \]
    Stated differently, the learner observes a trade with a fixed probability $\nicefrac 1{N_\delta}$, while the probability masses on the left and on the right are determined by the position of $p$, and are constant across all choices of $S_i$ by the adversary. Any trade that the learner observes corresponds to a $\Delta$ gain with probability $\Delta$ and to a $\delta$ gain with the remaining probability. All in all, the gain from trade for any price posted by the learner (in expectation over both the random choice of $S_i$ and the randomness at that specific round) is:
    \[
        \E{\gft_t(p_t)} = \frac{1}{N_\delta} \left[\frac{\Delta}{N_\Delta} + \delta \left(1 - \frac1{N_\Delta}\right)\right]= \delta (\Delta^2 + \delta - \Delta \delta) \le \delta (\Delta^2 + \delta),
    \]
    where $N_{\Delta} := \nicefrac 1\Delta$ and represents the number of possible $S_i$ that the adversary randomly select at the beginning. 
    Summing up the inequality over all times $t$, we get (for all learners that do not post multiples of $\delta$)
    \begin{equation}
        \label{eq:lower_2prices_pstar_2}
        \E{\sum_{t=1}^T \gft_t(p_t)} \le \delta (\Delta^2 + \delta) T   
    \end{equation}
    Via Yao's Theorem and combining \new{Inequalities~\ref{eq:lower_2prices_pstar} and \Cref{eq:lower_2prices_pstar_2}}, we get:
    \[
        R_T^{\alpha,\star} \ge \E{\max_{p \in [0,1]} \sum_{t=1}^T \gft_t(p) - \alpha \sum_{t=1}^T \gft_t(p_t)}
        \ge (\Delta  - \alpha (\Delta^2 + \delta))\delta \cdot T
    \]
    At this point we set\footnote{At the beginning of the proof we assumed $\nicefrac 1\Delta$, $\nicefrac 1\delta$ and $\nicefrac \Delta\delta$ to be integer. It is easy to see that this is without loss of generality, given these choices of $\delta$ and $\Delta$} $\Delta = \nicefrac 1{2\alpha}$ and $\delta = \nicefrac 1{8\alpha^2}$:
    \[
        R_T^{\alpha,\star} \ge \E{\max_{p \in [0,1]} \sum_{t=1}^T \gft_t(p) - \alpha \sum_{t=1}^T \gft_t(p_t)}
        \ge \frac{1}{64 \alpha^3} T.
    \]
    The proof above requires the ``grid hiding" assumption that the learning algorithm cannot post prices that are on the grid  (multiples of $\delta$). 
    
    One way to proceed is to scale down the instance by a constant factor, say $\nicefrac 12$, so that all prices and valuations will be in the range $0$ to $\nicefrac 12$. Then the adversary adds a random uniform number (called a shift) between $0$ up to $\nicefrac 12$ (see \Cref{fig:my_label}). It is clear that any algorithm has zero probability to pinpoint the exact value of the shift (see also the proof of \Cref{thm:lower-two-bits-two-prices}), thus the learning algorithm can post a multiple of $\nicefrac \delta 2$ plus the required shift with probability $0$. In this scaled down instance the total gain from trade derived from the optimal fixed price goes down by a factor of $\nicefrac 12$, while the the gain of the learning algorithm is scaled down by further factor of at least $2$ (since the learner has to deal with the extra uncertainty due to the random shift). Ergo, the $\alpha$-regret will be at least ${\nicefrac{1} {128 \alpha^3}} T$ which completes the proof.
\end{proof}

\subsection{Upper bound on the \texorpdfstring{$2$-regret}\ , posting two prices and given one-bit feedback}\label{subsec:estimator}
    
    The main result in this section is presented in Theorem \ref{thm:upper-one-bit-two-prices}: it is possible to achieve sublinear $2$-regret with one-bit feedback (and by posting two prices). We find this to be the most surprising result in this paper. The crucial ingredient of our approach is an unbiased estimator, $\egft$, of the gain from trade that uses two prices and {\em one single bit} of feedback. This seems quite remarkable. The gain from trade is a discontinuous function composed by two different objects: the difference $(b-s)$ and the indicator variable $\ind{ s \le p \le 
    b}$. Both these two objects are easy to estimate {\em independently}, but for the gain from trade we need an estimator of their product. To estimate $\gft(p)$ for any fixed price $p$, we construct an estimation procedure that considers both features at the same time: it tosses a biased coin with head probability $p$; if head, it posts price $p$ to the buyer and a price drawn uniformly at random in $[0,p]$ to the seller; if tails, it posts price $p$ to the seller and a price drawn u.a.r. in $[p,1]$ to the buyer. The formal procedure is described in the pseudocode, while the following lemma proves that this procedure yields an unbiased estimator of the gain from trade.

    \begin{algorithm*}[t]
        \begin{algorithmic}[t]
        \State \textbf{Input:} price $p$
        \State Toss a biased coin with head probability $p$
        \State \textbf{if} head \textbf{then} Draw $U$ u.a.r. in $[0,p]$ and set $\hat p \gets U$, $\hat q \gets p$
        \State \textbf{else} Draw $V$ u.a.r. in $[p,1]$ and set $\hat p \gets p$, $\hat q \gets V$
        \State Post price $\hat p$ to the seller and $\hat q$ to the buyer and observe the one-bit feedback $\ind{s \le \hat p\le \hat q \le b}$
        \State \textbf{Return:} $\egft(p) \gets \ind{s \le \hat p\le \hat q \le b}$ \Comment{Unbiased estimator of $\gft(p)$}
         \caption*{\textbf{Estimation procedure of $\gft$ using two prices and one-bit feedback}}
    \end{algorithmic}
    \end{algorithm*}

    \begin{lemma}
    \label{lem:estimators}
        Fix any agents' valuations $s,b \in [0,1]$. For any price $p \in [0,1]$, it holds that $\egft(p)$ is an unbiased estimator of $\gft(p)$: $\E{\egft(p)} = \gft(p)$, where the expectation is with respect to the randomness of the estimation procedure.
    \end{lemma}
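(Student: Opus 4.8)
The plan is to compute $\E{\egft(p)}$ directly by conditioning on the biased coin toss performed inside the procedure. Let $U \sim \mathrm{Unif}[0,p]$ be the seller price drawn on heads and $V \sim \mathrm{Unif}[p,1]$ the buyer price drawn on tails. Since on heads the posted pair is $(\hat p,\hat q)=(U,p)$ and on tails it is $(\hat p,\hat q)=(p,V)$, and in both cases $\egft(p)$ returns the trade indicator $\ind{s\le\hat p\le\hat q\le b}$, the law of total expectation gives
\[
\E{\egft(p)} = p\cdot\E{\ind{s\le U\le p\le b}} + (1-p)\cdot\E{\ind{s\le p\le V\le b}}.
\]
I would then evaluate the two expectations separately and show they contribute $(p-s)\,\ind{s\le p\le b}$ and $(b-p)\,\ind{s\le p\le b}$ respectively.

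For the first term, observe that $U\le p$ holds with probability one, so $\{s\le U\le p\le b\} = \{U\ge s\}\cap\{p\le b\}$; pulling the deterministic factor $\ind{p\le b}$ out of the expectation and using $\P{U\ge s}=\frac{p-s}{p}\,\ind{s\le p}$ for $U$ uniform on $[0,p]$ (recall $s\ge 0$), the first term equals $p\cdot\ind{p\le b}\cdot\frac{p-s}{p}\,\ind{s\le p} = (p-s)\,\ind{s\le p\le b}$. Symmetrically, $V\ge p$ with probability one, so $\{s\le p\le V\le b\}=\{s\le p\}\cap\{V\le b\}$, and with $\P{V\le b}=\frac{b-p}{1-p}\,\ind{p\le b}$ for $V$ uniform on $[p,1]$ (recall $b\le 1$), the second term equals $(1-p)\cdot\ind{s\le p}\cdot\frac{b-p}{1-p}\,\ind{p\le b} = (b-p)\,\ind{s\le p\le b}$. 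Adding the two gives $(p-s)\,\ind{s\le p\le b} + (b-p)\,\ind{s\le p\le b} = (b-s)\,\ind{s\le p\le b} = \gft(p)$, which is the claim.

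The computation is elementary; the only subtlety is the pair of degenerate endpoints $p=0$ and $p=1$, where the coin is deterministic and one of the two uniform laws collapses (so that the cancellations $p\cdot\frac1p$ and $(1-p)\cdot\frac1{1-p}$ above are formally of the form $0\cdot\infty$). I would dispatch these by hand: at $p=0$ only the tails branch occurs with $V\sim\mathrm{Unif}[0,1]$, and one checks directly that $\E{\ind{s\le 0\le V\le b}} = \ind{s=0}\cdot b = \gft(0)$, while the case $p=1$ is the mirror image using the heads branch. I do not expect a genuine obstacle here: the entire content of the lemma is that the two ``halves'' of the gain from trade --- the quantity $p-s$ collected on the seller side and $b-p$ collected on the buyer side --- are each estimated without bias by the corresponding branch of the coin, and on the event $\{s\le p\le b\}$ where a trade can occur they sum to exactly $b-s$.
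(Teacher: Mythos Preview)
Your proof is correct and follows essentially the same approach as the paper: condition on the coin toss, compute the expectation of the trade indicator on each branch, and observe that the $p$ and $1-p$ weights cancel the denominators from the uniform probabilities, leaving $(p-s)+(b-p)=b-s$ on the event $\{s\le p\le b\}$. If anything, you are slightly more careful than the paper, which silently divides by $p$ and $1-p$ without singling out the endpoints $p=0$ and $p=1$.
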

    
    \begin{proof}
        Note that $p$ is fixed and known to the learner, $s$ and $b$ are fixed but unknown and the learner has to estimate the fixed but unknown quantity $\gft(p) =\ind{s \le p \le q \le b}\cdot (b_t - s_t)$ using only the \new{one}-bit feedback. To analyze the expected value of $\egft(p)$ we define two random variables:
        \[
            X_s(p) = \ind{s \le U \le p \le b}, \ X_b(p) = \ind{s \le p \le V \le b}, 
        \]
        where $U \sim Unif(0,p)$ and $V \sim Unif(p,1)$.
    Clearly, if $p \not \in [s,b]$, the two random variables attains value $0$ with probability $1$ (and therefore are both unbiased estimators of $\gft(p)$ in that case). Consider now the case in which $p \in [s,b]$ and compute their expected value:
    \begin{align*}
        \E{X_s(p)} &= \P{s \le U \le p \le b} = \P{s \le U} = \frac{p-s}{p}, \\
        \E{X_b(p)} &= \P{s \le p \le V \le b} = \P{V \le b} = \frac{b-p}{1-p}.
    \end{align*}
    The estimator $\egft(p)$ works as follows: with probability $p$ it posts prices $(U,p)$, otherwise $(p,V)$, then receives the one-bit feedback from the agents and returns it. Conditioning on the result of the toss of the biased coin it is then easy to compute the expected value of $\egft(p)$:
    \begin{align*}
        \E{\egft(p)} &= p \, \E{X_s(p)} + (1-p) \, \E{X_b(p)} \\
        &=  \ind{s \le p \le q \le b}\, (b - s) = \gft(p).    
    \end{align*}
    \end{proof}

    This estimation procedure becomes a powerful tool to estimate the gain from trade that the learner would have extracted at time $t$ posting price $p$ using randomization and {\em one single bit} of feedback. Note here that the possibility of posting two different prices is crucial: as we have argued in the previous section, one single price is not able to do that, even for two-bit feedback. Given the estimator $\egft$ (actually it consists of a family of estimators: one for each price $p$) we present our learning algorithm \blockdec. Similarly to what is done in Chapter 4 of \citet{NRTV2007}, the learner divides the time horizon in $S$ time blocks $B_{\tau}$ of equal length\footnote{For ease of exposition we assume that $S$ divides $T$. This is without loss of generality in our case, as one can always add some dummy time steps for an additive regret of at most $\nicefrac TS$.} and uses as subroutine some expert algorithm $\calE$ on a meta-instance that considers each time block as a time step and each price in a suitable grid $Q$ as an action. In each block the learner posts the same price $p_{\tau}$ in all but $|Q|$ time steps, where it uses $\egft$ to estimate the total gain from trade obtained in $B_{\tau}$ by all prices in $Q$. The details of \blockdec are presented in the pseudocode.
    
    \begin{algorithm*}[t]
    \begin{algorithmic}[1]
        \State \textbf{Input:} time horizon $T$, number of blocks $S$, grid $Q$ and expert algorithm $\calE$
        \State $\Delta \gets T/S$, $K \gets |Q|$
        \State $B_{\tau} \gets \{({\tau}-1) \cdot \Delta + 1, \dots, \tau\cdot \Delta\}, $ for all $ \tau = 1,2, \dots, S$
        \State Initialize $\calE$ with time horizon $S$ and $K$ actions, one for each $p_i \in Q$
        \For{each round $\tau=1,2,\dots,S$}
        \State Receive from $\calE$ the price $p_{\tau}$
        \State Select uniformly at random an injection $h_{\tau}: Q \to B_{\tau}$ \Comment{We need $\Delta >> |Q|$}
        \For{each round $t\in B_{\tau}$}
        \If{$h_{\tau}(p_i) = t$ for some price $p_i$}
            \State Use the estimator $\egft(p_i)$ at time $t$ and call its output $\egft_{\tau}(p_i)$
        \EndIf
        \State \textbf{else:} Post price $p_{\tau}$ and ignore feedback
        \EndFor
        \State Feed to $\calE$ the estimated gains $\{ \egft_{\tau}(p_i)\}_{i=1,\dots, K}$
        \EndFor
    \end{algorithmic}
    \caption*{\blockdec (\bd)}\label{alg:block}
    \end{algorithm*}
    
    Consider any instantiation of the algorithm \blockdec, fix any block $B_{\tau}$ and price $p$. With a slight abuse of notation we denote the average gain from trade posting price $p$ in $B_{\tau}$ as $\gft_{\tau}$; formally,
    \[
        \gft_{\tau}(p) = \frac{1}{\Delta} \sum_{t \in B_{\tau}} \gft_t(p).
    \]
    We show that $\egft_{\tau}(p)$ as defined in the pseudocode is an unbiased estimator of $\gft_{\tau}(p)$, where the randomization is due to the random choice of the injective function $h_{\tau}$ and the inherent randomness in the estimator $\egft$. 
    \begin{lemma}     \label{lemma:second_estimator}
        Fix any sequence of valuations, then the random variable $\egft_{\tau}(p_i)$ is an unbiased estimator of $\gft_{\tau}(p_i)$ for any $\tau \in \{1,2,\dots,S\}$ and price $p_i$ on the grid $Q$ .
    \end{lemma}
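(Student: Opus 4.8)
The plan is to condition on the random injection $h_\tau$ and then apply \Cref{lem:estimators} time step by time step. First I would observe that, for a fixed price $p_i \in Q$, the marginal distribution of the time step $h_\tau(p_i)$ at which the estimation subroutine is invoked on $p_i$ is uniform over $B_\tau$. This is a pure symmetry argument: $h_\tau$ is drawn uniformly at random among all injections $Q \to B_\tau$, and for each target $t \in B_\tau$ the number of injections with $h_\tau(p_i) = t$ equals the number of injections from $Q \setminus \{p_i\}$ into $B_\tau \setminus \{t\}$, a quantity that does not depend on $t$; hence $\P{h_\tau(p_i) = t} = 1/\Delta$ for every $t \in B_\tau$. (Here we use that $\Delta = |B_\tau| \ge |Q|$, so at least one such injection exists — this is the role of the $\Delta \gg |Q|$ requirement.)

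Next I would note that, conditioned on the event $\{h_\tau(p_i) = t\}$, the value $\egft_\tau(p_i)$ returned by \block is, by construction, exactly the output of the estimation procedure analysed in \Cref{lem:estimators} run on input price $p_i$ against the fixed valuations $(s_t, b_t)$. Crucially, the internal randomness of that procedure (the biased coin and the uniform draws $U,V$) is independent of the draw of $h_\tau$, so conditioning on $h_\tau(p_i) = t$ leaves its distribution unchanged. Therefore \Cref{lem:estimators} applied to the valuations at time $t$ yields $\E{\egft_\tau(p_i) \mid h_\tau(p_i) = t} = \gft_t(p_i)$.

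Finally I would combine the two observations through the tower rule:
\[
    \E{\egft_\tau(p_i)} = \sum_{t \in B_\tau} \P{h_\tau(p_i) = t}\, \E{\egft_\tau(p_i) \mid h_\tau(p_i) = t} = \frac{1}{\Delta}\sum_{t \in B_\tau} \gft_t(p_i) = \gft_\tau(p_i),
\]
which is exactly the claim. I do not expect a real obstacle here; the only points requiring a modicum of care are the marginal-uniformity of $h_\tau(p_i)$ over $B_\tau$ and the clean separation of the two independent sources of randomness — the choice of $h_\tau$ and the internal coins of $\egft$ — both of which are immediate from the description of \block.
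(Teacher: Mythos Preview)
Your proposal is correct and follows essentially the same argument as the paper: condition on the uniformly distributed time step $h_\tau(p_i)$, invoke \Cref{lem:estimators} on the fixed valuations $(s_t,b_t)$ using independence of the estimator's internal coins, and then uncondition via the tower rule. If anything, your write-up is slightly more careful in justifying the marginal uniformity of $h_\tau(p_i)$ and the clean separation of the two randomness sources.
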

    \begin{proof}
        For any fixed price $p_i$ it is clear that $h_{\tau}(p_i)$ is distributed uniformly at random in the time steps contained in the block $B_{\tau}$. Moreover, given $h_{\tau}$, the $\egft$ are still unbiased estimators of the corresponding time steps. Thus, we have the following:
        \begin{align*}
            \E{\egft_\tau(p_i)} &= \sum_{t \in B_{\tau}} \P{h_{\tau}(p_i) = t} \E{\egft_t(p_i) \mid h_{\tau}(p_i) = t}   \\
            &= \sum_{t=1}^T \frac1{\Delta} \E{\egft_
            {t}(p_i)\mid h_{\tau}(p_i) = t}\\
            &= \sum_{t=1}^T \frac1{\Delta} \gft_t(p_i) = \gft_{\tau}(p_i).
        \end{align*}
        A notational observation: with the random variable $\egft_t(p)$ we refer to the result of the estimation procedure in $p$ run at time $t$, which is an unbiased estimator of the gain from trade of price $p$ achievable at time $t$.
    \end{proof}

\begin{theorem}[Upper bound on $2$-regret posting two prices, one-bit feedback]   
\label{thm:upper-one-bit-two-prices}
    In one-bit feedback model where the learner is allowed to post two prices, the $2$-regret of \blockdec (\bd) is such that $
        R_T^2(\bd) \le 5 T^{\nicefrac 34} \sqrt{\log(T)},$
    for appropriate choices of the expert algorithm $\calE$, grid $Q$ and number of blocks $S.$
\end{theorem}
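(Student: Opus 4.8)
The plan is to put together three ingredients: the discretization claim (Claim~\ref{claim:discretization}), the unbiasedness of the block estimator $\egft_\tau$ (Lemma~\ref{lemma:second_estimator}), and a standard regret bound for the expert subroutine $\calE$ on the meta-instance. First I would observe that the $2$-regret of $\bd$ decomposes, via Claim~\ref{claim:discretization}, into $\delta(Q)\cdot T$ plus twice the ($1$-)regret of $\bd$ against the best price on the grid $Q$. For an equally spaced grid with $K=|Q|$ points, $\delta(Q)=1/K$, contributing an additive $T/K$. It then remains to bound $\sup_\S\{\max_{q\in Q}\sum_{t=1}^T\gft_t(q)-\sum_{t=1}^T\E{\gft_t(p_t,q_t)}\}$.

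The key step is to relate the realized gain of $\bd$ to the gain of the expert algorithm on the meta-instance. Inside each block $B_\tau$, the algorithm posts the expert-chosen price $p_\tau$ in all but $K$ of the $\Delta=T/S$ steps; in those $K$ estimation steps the posted (two) prices may earn essentially nothing, costing at most $K$ per block, i.e.\ $SK$ overall. For the remaining steps, $\sum_{t\in B_\tau}\gft_t(p_\tau)\ge \Delta\,\gft_\tau(p_\tau)-K$, and $\calE$ is fed the per-block estimated gains $\egft_\tau(\cdot)$ which by Lemma~\ref{lemma:second_estimator} satisfy $\E{\egft_\tau(p_i)}=\gft_\tau(p_i)$. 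Since $\egft_\tau\in[0,1]$, a standard expert bound (e.g.\ Theorem~2.5 of \citet{AroraHK12} applied to the meta-instance with horizon $S$ and $K$ actions) gives, after taking expectations and using unbiasedness,
\[
\max_{q\in Q}\sum_{\tau=1}^S\gft_\tau(q)-\sum_{\tau=1}^S\E{\gft_\tau(p_\tau)}\le O\!\left(\sqrt{S\log K}\right).
\]
Multiplying through by $\Delta=T/S$ converts per-block averages back to totals, so the block-level regret translates to $O\!\left(\tfrac{T}{S}\sqrt{S\log K}\right)=O\!\left(T\sqrt{\tfrac{\log K}{S}}\right)$ on the original time scale. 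Collecting all three error terms, the $1$-regret of $\bd$ against the grid is $O\!\big(T\sqrt{\log K/S}+SK\big)$, and adding the discretization term the $2$-regret is $O\!\big(T\sqrt{\log K/S}+SK+T/K\big)$.

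Finally I would optimize the free parameters $S$ (number of blocks), $K=|Q|$ (grid size). Balancing $T/K$ with $T\sqrt{\log K/S}$ and $SK$: take $K\asymp T^{1/4}$ so $T/K\asymp T^{3/4}$; take $S\asymp T^{1/2}$ so $T\sqrt{\log K/S}\asymp T^{3/4}\sqrt{\log T}$ and $SK\asymp T^{3/4}$. All three terms are $\tilde O(T^{3/4})$, yielding $R_T^2(\bd)\le 5T^{3/4}\sqrt{\log T}$ after tracking constants (the factor $2$ from Claim~\ref{claim:discretization} and the constant from the expert bound). The main obstacle is the bookkeeping in the second paragraph: one must be careful that the $K$ "wasted" estimation steps per block only cost an additive $SK$ (not more) and that feeding $\calE$ the \emph{estimated} rather than true block gains is harmless in expectation — this is exactly where Lemma~\ref{lemma:second_estimator} and the fact that standard expert regret bounds hold for any bounded gain sequence (hence also in expectation for an unbiased one) are used; one should also check that the condition $\Delta\gg K$ needed for the injection $h_\tau$ to exist is satisfied by the chosen parameters, which it is since $\Delta=T/S\asymp T^{1/2}\gg T^{1/4}\asymp K$.
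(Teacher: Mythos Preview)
Your proposal is correct and follows essentially the same route as the paper: reduce to the grid via Claim~\ref{claim:discretization}, run an experts algorithm on block-averaged gains using the unbiased estimator of Lemma~\ref{lemma:second_estimator}, pay $|Q|S$ for exploration, multiply the block-level expert regret by $\Delta$, and choose $S\asymp T^{1/2}$, $|Q|\asymp T^{1/4}$. The paper handles the ``feeding estimated gains'' step by first fixing the exploration/estimation randomness to get the expert bound on the realized $\egft_\tau$'s and then taking expectation using that $p_\tau$ is independent of $h_\tau$ and the estimators \emph{within} block $\tau$; you flag exactly this point as the main bookkeeping issue, so there is no real difference in approach.
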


\begin{proof}
    We consider a grid $Q$ of equally spaced prices (we set the step later), and denote with $\Delta = \nicefrac TS$ the length of every time block. The learner keeps playing the same price in each block, apart from the explorations steps, that are drawn uniformly at random. The learner decides which action to play according to some routine $\calE$ that is run on $S$ time steps and $|Q|$ actions: this is the reason we talk interchangeably of actions and prices. 
    
   From Lemma \ref{lemma:second_estimator} we know that the estimators in a block, i.e., $\egft_{\tau}(p_i)$ are indeed unbiased estimators of $\gft_{\tau}(p_i) $. Since this holds for any price $p_i \in Q$, it also holds for any random price $\hat p$ whose randomness is independent on the choice of the injection $h_{\tau}$ and the internal randomization of the estimators. Thus, the same holds even if instead of a fixed price $p_i$ we consider price  $p_{\tau}$ posted by the algorithm because it depends {\em only} on what happened in past blocks.
    Let now $\calE$ be the Multiplicative Weights algorithm \new{\citep{AroraHK12}}. If we fix the randomness in the exploration and in the estimation upfront and consider only the inherent randomness in $\calE$ we inherit the bound on the regret of $\calE$ on the realized estimated gain from trades (note that they are all bounded in $[0,1]$)
    \begin{equation}
    \label{eq:regret_calE}
        \max_{p\in Q}\sum_{\tau=1}^{S} \egft_{\tau}(p) - \E{\sum_{\tau = 1}^S \egft_{\tau}(p_{\tau})} \le 2 \sqrt{S\log(|Q|)}
    \end{equation}
    The randomness of $\calE$ depends somehow on the realizations of the random injections and estimators, but if we look at any block $B_\tau$, we see that the random price output by the routine is independent from $h_{\tau}$ and the estimators in that block. Therefore, we can safely take the expected value (on the randomness of the $h_{\tau}$ and the estimators) on both sides of \new{Inequality~\ref{eq:regret_calE}, and start the following chain of inequalities:
    \begin{align}
    \nonumber
        2 \sqrt{S\log(|Q|)} &\ge \E{\max_{p\in Q}\sum_{\tau=1}^{S} \egft_{\tau}(p) - \sum_{\tau = 1}^S \egft_{\tau}(p_{\tau})}\\&\ge \max_{p\in Q}\E{\sum_{\tau=1}^{S} \egft_{\tau}(p) - \sum_{\tau = 1}^S \egft_{\tau}(p_{\tau})} \tag{Jensen Inequality}\\
        \label{eq:regret_calE_exp}
        &= \max_{p\in Q}\sum_{\tau=1}^{S} \gft_{\tau}(p) - \E{\sum_{\tau = 1}^S \gft_{\tau}(p_{\tau})},
    \end{align}
     where the equality follows from \Cref{lemma:second_estimator}.}
    Now, we move from the blocks time scale to the normal one and multiply everything by a factor $\Delta$. Our algorithm does not always play $p_{\tau}$, but for each one of the block, it spends $|Q|$ steps exploring. Therefore, we need to consider an extra $|Q|S$ additive term. 
    At this point, we have all the ingredients to bound the $2$-regret of our algorithm. Plugging \new{Inequality~\ref{eq:regret_calE_exp}} and the observation about the extra losses incurred by the exploration into the discretization inequality (\Cref{claim:discretization}) we get:
    \[
        R_T^2(\bd) \le 2\Delta \sqrt{{S}\log{|Q|}} + |Q|S + \delta(Q)T.
    \]
    The theorem then follows by optimizing the free parameters: we set $\Delta = \sqrt{T}$ and we choose $Q$ to be the uniform grid of multiples of $T^{-\nicefrac 14}$ (thus $S= \sqrt{T}$, $\delta(Q) = T^{-\nicefrac 14}$ and $|Q| = T^{\nicefrac 14}+1$). 
    \end{proof}

\subsection{Lower bound on \texorpdfstring{$2$-regret}\ , posting two prices and two-bit feedback}

    In this section, we complement the positive results for the single price and two-bit feedback setting with a lower bound on the $2$-regret achievable in the (easier) two prices setting. The lower bound construction entails a careful combination of ``price hiding'' and an exploration-exploitation trade off that is inspired to the partial monitoring literature (more precisely to the revealing action game, see e.g., \citet{BartokFPRS14}). First, in \Cref{lem:two_thirds} we construct a randomized sequence of valuations such that any deterministic learning algorithm that is restricted to {\em not} post three specific pair of prices suffers at least $\Omega(T^{\nicefrac 23})$ expected $2$-regret (against the best fixed price in hindsight that possibly uses these forbidden prices). Then, in \Cref{thm:lower-two-bits-two-prices} we derive the general lower bound by hiding the forbidden prices with a random shift (as in \Cref{thm:lower-two-bits-one-price}). 

    We start by constructing a randomized instance that induces $\Omega(T^{\nicefrac 23})$ $2$-regret against all the deterministic algorithms that are forbidden to post prices $(p_i,p_i)$ for $p_0 = \nicefrac 12$, $p_1 = \nicefrac 14,$ and $p_2= \nicefrac 34.$ This instance is a uniform mixture of three stationary distributions, a baseline and two perturbations.
    
    \paragraph{The family of randomized sequences} The family of distributions we introduce are supported on the same set of points \supp and are parametrized by $\eps \in (-\nicefrac 1{12}, \nicefrac 1{12})$. The support \supp is composed by four high-gain-from-trade valuations (with $b-s = \nicefrac 14$), i.e.,  $(0, \nicefrac 14),(\nicefrac 14, \nicefrac 12),(\nicefrac 12, \nicefrac 34)$, and $(\nicefrac 34,1)$, and four low-gain-from-trade ones (with $b-s = \nicefrac 1{12}$), i.e., $(\nicefrac 16, \nicefrac 14), (\nicefrac 14,\nicefrac 13), (\nicefrac 23, \nicefrac 34),$ and $(\nicefrac 34, \nicefrac 56)$. The valuations $(0, \nicefrac 14)$ and $(\nicefrac 34, \nicefrac 56) $ are drawn with probability $\nicefrac 18 + \eps$, the valuations $(\nicefrac 16, \nicefrac 14)$ and $(\nicefrac 34,1)$ with probability $\nicefrac 18 - \eps$, while the remaining valuations in the support are drawn with probability  $\nicefrac 18$. As a convention, we denote with $\Pb^{\eps}$ the probability measure under which the random variables $(S_t,B_t)$ are drawn i.i.d. according to the distribution corresponding to parameter $\eps$ we just described (we use $\mathbb E^{\eps}$ to denote the relative expectation). To avoid confusion with the explicit dependence on the time step $t$, we refer to $(S,B)$ as a generic random variable drawn according to the distribution, and with $\gft$ the corresponding gain from trade. The expected gain from trade corresponding to posting any pair of budget balanced $(p,q)$ prices under $\Pb^{\eps}$ can be computed explicitly:
    \[ 
    \mathbb E^{\eps}[\gft(p,q)] = 
        \begin{cases}
            \tfrac 1{32} + \tfrac \eps 4  &\text{ if } (p,q) \in E_1\\
            \tfrac 1{32} - \tfrac \eps 4 &\text{ if } (p,q) \in E_2\\
            \tfrac 1{32} &\text{ if } (p,q) \in E\\
            \tfrac 1{24} + \tfrac \eps 6 &\text{ if } (p,q) \in G_1\\
            \tfrac 1{24} - \tfrac \eps 6 &\text{ if } (p,q) \in G_2\\
            \tfrac 1{24} &\text{ if } (p,q) \in G\\
            \tfrac 1{16} &\text{ if } (p,q) = (\tfrac 12, \tfrac 12) \\
            \tfrac 1{12} + \tfrac \eps 6 &\text{ if } (p,q) = (\tfrac 14, \tfrac 14)\\
            \tfrac 1{12} - \tfrac \eps 6 &\text{ if } (p,q) = (\tfrac 34, \tfrac 34) \\
            0 &\text{ for all the others $(p,q) \in B$} 
        \end{cases}
    \]
    We denote with $\U$ the set of all the budget balanced prices ($p \le q$) which corresponds to the upper-left triangle in the $[0,1]^2$ square. The other two families of sets are defined as follows (see also  \Cref{fig:support}). The $E_i$ sets are $E_1 = [0,\tfrac 16) \times [0,\tfrac 14] \cap \, \U$, $E_2 = [\tfrac 34,1] \times (\tfrac 56,1] \cap  \, \U$, and $E = \left([\tfrac 14,\tfrac 12) \times (\tfrac 13,\tfrac 12] \cup [\tfrac 12,\tfrac 23) \times (\tfrac 12, \tfrac 34]\right) \cap  \, \U$. The $G_i$ sets are $G_1 = [\tfrac 16,\tfrac 14) \times [\tfrac 16,\tfrac 14] \cap  \, \U$, $G_2 = [\tfrac 34,\tfrac 56] \times (\tfrac 34,\tfrac 56] \cap  \, \U$, and $G = \left([\tfrac 14,\tfrac 13] \times (\tfrac 14,\tfrac 13] \cup [\tfrac 23,\tfrac 34) \times [\tfrac 23,\tfrac 34]\right) \cap  \, \U$.
    
    \begin{figure}
        \centering
        \begin{subfigure}{.45\textwidth}
        \centering
            \includegraphics[width = 1\textwidth]{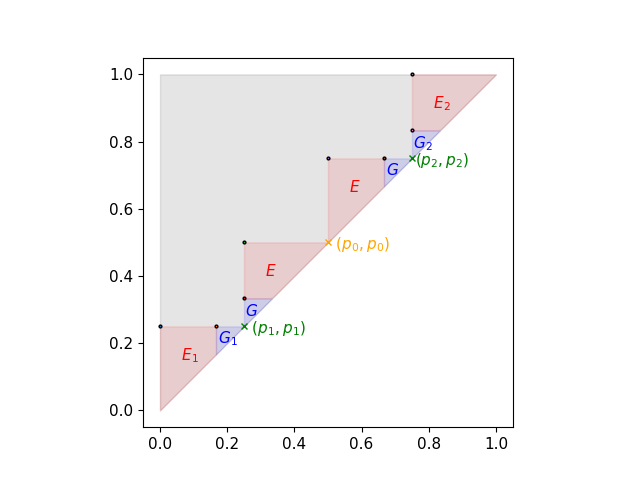}
        \caption{$\supp$ and gain from trade regions.}
        \label{fig:support}    
        \end{subfigure}
        \begin{subfigure}{.45\textwidth}
        \centering
        \includegraphics[width = 1\textwidth]{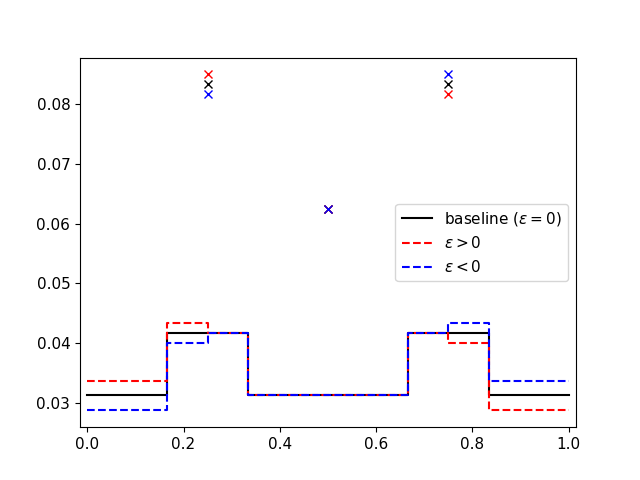}
        \caption{Expected Gain From Trade: $
        \mathbb E^{\eps} [\gft(p)]$}
        \label{fig:gft_profile}    
        \end{subfigure}
        \caption{Visualization of the family of distributions used in the lower bound construction of \Cref{thm:lower-two-bits-two-prices}.}
    \end{figure}

    The expected gain from trade function exhibits a rich behaviour (see \Cref{fig:gft_profile} for a visualization of the expected gain from trade profile achievable by posting the same price to both agents). The set of budget balanced prices $\U$ is naturally partitioned into $5$ regions which are visualized in \Cref{fig:support}. Prices in the same region share the same expected gain from trade (up to an additive $\Theta(\eps)$ term), while prices in different regions yield an expected gain from trade that is an additive constant factor apart. Precisely, we have the following:
    \begin{itemize}
        \item There are two ``optimal'' (strongly budget balanced) prices, $p_1 = \nicefrac 14$ and $p_2 = \nicefrac 34$ that yield an expected gain from trade around $\nicefrac 1{12}$ (green in \Cref{fig:support}). If $\varepsilon >0$, then $p_1$ is better than $p_2$ (by an additive $\Theta(\eps)$ factor), if $\eps <0 $ the converse holds.
        \item There is an isolated (strongly budget balanced) price  $p_0 = \nicefrac 12$ that guarantees an expected gain from trade of $\nicefrac 1{16},$ regardless of $\eps$ (orange in \Cref{fig:support}).
        \item The are three ``good'' regions $G_1, G_2$ and $G$ that yield an expected gain from trade around $\nicefrac 1{24}$ (blue in \Cref{fig:support}). While $G$ guarantees exactly $\nicefrac 1{24}$ regardless of $\eps$, the two $G_i$ are such that one is an additive $\Theta(\eps)$ term larger and another is an additive $\Theta(\eps)$ term smaller (depending on the sign of $\eps$).
        \item There are three ``bad'' regions $E_1, E_2$ and $E$ that yield an expected gain from trade around $\nicefrac 1{32}$ (red in \Cref{fig:support}). While $E$ guarantees exactly $\nicefrac 1{32}$ regardless of $\eps$, the two $E_i$ are such that one is an additive $\Theta(\eps)$ term larger and another is an additive $\Theta(\eps)$ term smaller (depending on the sign of $\eps$).
        \item All other prices yield $0$ gain from trade (gray in \Cref{fig:support}).
    \end{itemize} 

        \paragraph{The construction of the hard randomized sequence} The hard (randomized) sequence $\mathcal S^*$ is constructed as follows: with probability $\nicefrac 13$ the sequence of valuations is drawn i.i.d. according to probability measure $\Pb^0$, with probability $\nicefrac 13$ it is drawn i.i.d. according to $\Pb^{\eps}$ for some positive ${\eps}$ we set later, and with the remaining probability according to $\Pb^{-\eps}$ (for the same positive $\eps$). To avoid overloading the notation, we refer to the latter two probabilities measures with $\Pb^1$, respectively $\Pb^2$. Note, the optimal (ex-ante) fixed price under $\Pb^1$ is $p_1$ (posted to both agents), under $\Pb^2$ is $p_2 $ (posted to both agents); while under $\Pb^0$ both these prices are optimal. As a first quantitative result we prove that any deterministic learning algorithm that is constrained so as not to post any of the prices $p_0,p_1,$ and $p_2$ suffers $\Omega(T^{\nicefrac 23})$ $2$-regret against $\S^*$ (such a ``constraint'' is effectively achieved by hiding these prices, see \Cref{thm:lower-two-bits-two-prices}).
        
        \begin{lemma}
        \label{lem:two_thirds}
            Any deterministic learning algorithm $\mathcal A$ that cannot post prices $(p_i,p_i)$ for $i=0,1,2$ suffers expected $2$-regret against $\mathcal S^*$ (for a certain $0 < \eps <\nicefrac 1{12}$) that is at least:
            \[
                \E{R_T^2(\mathcal A, \mathcal S^*)} \ge \frac{1}{144} T^{\nicefrac 23}.
            \]
        \end{lemma} 
        \begin{proof}
            We start by studying the feedback that different prices receive. Crucially, the only place where the learner can hope to distinguish between the three distributions is in the exploration region $X = ([0, \tfrac 16) \times [0,1] \cup [\tfrac 16, 1] \times (\tfrac 56,1]) \cap \U$ (red in \Cref{fig:feedback_regions}). In fact, the only possibility to observe the $\eps$ difference in the probability measures is to observe either one of the two long intervals $(0,\nicefrac 14)$ or $(\nicefrac 34, 1)$ (with probability $\nicefrac 18 \pm \eps$) while avoiding the corresponding short interval (respectively $(\nicefrac 16,\nicefrac 14)$ or $(\nicefrac 34, \nicefrac 56),$ with probability $(\nicefrac 18 \mp \eps$). Posting prices in all the other regions does not help, as formalized in the following \new{c}laim. 
            \begin{figure}
                \centering
                \includegraphics[width = 0.5\textwidth]{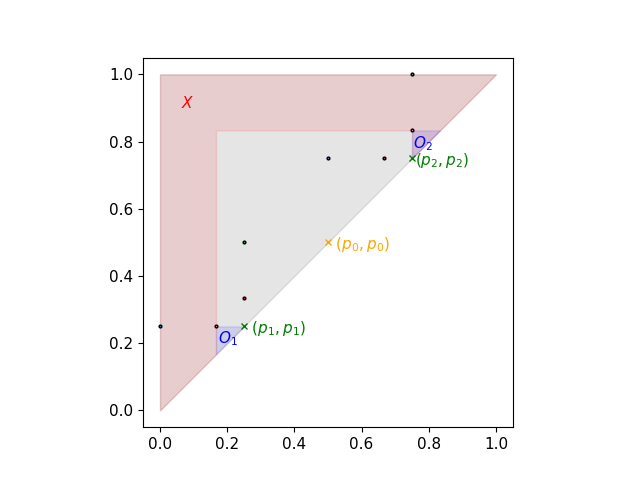}
                \caption{Feedback regions}
                \label{fig:feedback_regions}
            \end{figure}
            \begin{claim}
                \label{cl:no_exploration}
                Let $Z$ be the random variable representing the feedback received by posting prices $(p,q) \in [0,1]^2 \setminus X$, then for any $z \in \{0,1\}^2$ the following holds:
                \[
                    \Pb^0(Z = z) = \Pb^1(Z = z) = \Pb^2(Z = z).
                \]
            \end{claim}
            \begin{proof}[Proof of \Cref{cl:no_exploration}]
                For any $(p,q)$, the feedback $Z$ received is identified by the probability that $(S,B)$ falls within the following four rectangles that partition the $[0,1]^2$ square: upper-left $Q^{(1,1)}_{p,q} = [0,p] \times [q,1]$, upper-right $Q^{(0,1)}_{p,q} = (p,1] \times [q,1]$, lower-left $Q^{(1,0)}_{p,q} = [0,p] \times [0,q)$, and lower-right $Q^{(0,0)}_{p,q} = (p,1] \times [0,q)$. Formally, for any $z \in \{0,1\}^2$ we have that
                \[
                    \Pi{Z = z} = \Pi{(S,B) \in Q^z_{p,q}} = \sum_{(s,b) \in \supp \cap Q^z_{p,q}} \P{(S,B) = (s,b)},
                \]
                where \supp is the eight-point support of the distributions. Now, the only points whose probabilities change for different $\Pb^i$ are the ones contained in $X$, and are such that their variations cancel out. More precisely, for any $i=0,1,2$ it holds that:
                \[
                \Pi{(S,B) \in \{(0,\tfrac{1}{4}),(\tfrac 16,\tfrac{1}{4})\}} = \Pi{(S,B) \in \{(\tfrac{3}{4},1),(\tfrac 56,1)\}} = \tfrac 14.
                \]
                Therefore, for any $(p,q) \notin X$ it holds that
                \[
                    \Pi{Z = z} = \frac 18 \sum_{(s,b) \in \supp} \ind{(s,b)\in Q^z_{p,q}}.    
                \]
                Since the latter term does not depend on $i$, this concludes the proof of the \new{c}laim.
            \end{proof}
            \Cref{cl:no_exploration} and the analytic characterization of the expected gain from trade for different prices induce the following natural partition of $\U$ (see also \Cref{fig:feedback_regions}):
            \begin{itemize}
                \item \textit{The exploration region $X$}. Playing any pair of prices $(p,q) \in X$ the following inequality holds for $i = 1,2$:
                \begin{equation}
                    \label{eq:cost_expl}
                    \Ei{\gft(p_i) - 2 \gft(p,q)} \ge (\tfrac 1{12} + \tfrac \eps 6) - 2 (\tfrac 1{32} + \tfrac \eps 4)\ge \tfrac 1{48 }- \tfrac \eps {12}.
                \end{equation}
                Under $\mathbb E^0$ an analogous inequality holds, but without the $\eps$ term: 
                \begin{equation}
                    \label{eq:cost_expl_0}
                    \Ez{\gft(p_1) - 2 \gft(p,q)} = \Ez{\gft(p_2) - 2 \gft(p,q)} \ge \tfrac 1{48 }.
                \end{equation}
                \item \textit{The two candidate  regions $O_1 = G_1$ and $O_2 = G_2$}. When $(p,q)$ is in $O_i$ under $\mathbb E^i$ the following inequalities hold for $i=1,2$:
                \begin{equation}
                \label{eq:gain_opt}
                    \Ei{\gft(p_j) - 2 \gft(p,q)} \ge (\tfrac 1{12} + \tfrac \eps 6) - 2 (\tfrac 1{24} + \tfrac \eps 6)\ge  - \tfrac \eps 6.    
                \end{equation}
                Conversely, posting prices in  $O_j$ ($j \neq i$) under $\mathbb E^i$ yields:
                \begin{equation}
                \label{eq:cost_subopt}
                    \Ei{\gft(p_i) - 2 \gft(p,q)}\ge (\tfrac 1{12} + \tfrac \eps 6) - 2 (\tfrac 1{24} - \tfrac \eps 6)\ge \tfrac \eps 2. 
                \end{equation}
                Under $\mathbb E^0$ there is no difference between $O_1$ and $O_2$: for any $(p,q)$ in $O_1 \cup O_2$ it holds that
                \begin{equation}
                \label{eq:cost_opt_0}
                    \Ez{\gft(p_1) - 2 \gft(p,q)}=\Ez{\gft(p_2) - 2 \gft(p,q)}=0. 
                \end{equation}
                
                \item The rest of $\U\setminus \{(p_0,p_0), (p_1,p_1), (p_2,p_2)\}$. There the largest achievable gain from trade is $\nicefrac 1{24}$ (by posting prices in $G$), so the $2$-regret is as follows for $i=1,2$
                \begin{align}
                    \label{eq:cost_neutrality}      \Ei{\gft(p_i) - 2 \gft(p,q)} &\ge (\tfrac 1{12} + \tfrac \eps 6) - 2 \tfrac 1{24} \ge \tfrac \eps 6 \\
                    \label{eq:cost_neutrality_0}
                    \Ez{\gft(p_1) - 2 \gft(p,q)} &= \Ez{\gft(p_2) - 2 \gft(p,q)} =0 
                \end{align}
            \end{itemize}

    This partition into exploration region, candidate regions and rest of $\U$ simplifies the analysis of the expected performance of $\mathcal A$. Let $M_T(1)$, respectively $M_T(2)$, be the random variables that count the number of times that prices are posted in $O_1$, respectively $M_T(2)$, and let $N_T$ be the one that counts the prices posted in $X.$ We start by lower bounding the $2$-regret of $\mathcal A$ against $\Pb^1$:
    \begin{align}
    \nonumber
        &\mathbb E^1\left[\max_{p \in [0,1]} \sum_{t=1}^T \gft_t(p) - 2 \sum_{t=1}^T \gft_t(p_t,q_t)\right]\\
        &\ge \Eo{ \sum_{t=1}^T \gft_t(p_1) - 2 \sum_{t=1}^T \gft_t(p_t,q_t)}\tag*{(Jensen's inequality)}\\
    \nonumber
        &\ge \mathbb E^1\Big[\underbrace{(\tfrac 1
    {48 }- \tfrac \eps {12})N_T}_{\text{\new{Inequality~\ref{eq:cost_expl}}}}  \underbrace{- \tfrac \eps 6 M_T(1)}_{\text{\new{Inequality~\ref{eq:gain_opt}}}} + \underbrace{\tfrac \eps 2 M_T(2)}_{\text{\new{Inequality~\ref{eq:cost_subopt}}}} + \underbrace{\tfrac \eps 6 (T - M_T(2) - M_T(1) - N_T)}_{\text{\new{Inequality~\ref{eq:cost_neutrality}}}}\Big]\\
    \nonumber
    &= 
    \Eo{(\tfrac 1
    {48 }- \tfrac \eps {4})N_T + \tfrac \eps 3 M_T(2) - \tfrac \eps 3 M_T(1) + \tfrac \eps 6 T}\\
    \label{eq:intermediate_i}
    &\ge \Eo{\tfrac \eps 3 M_T(2) - \tfrac \eps 3 M_T(1)} + \tfrac \eps 6 T,
    \end{align}
    where the last inequality holds for any $\eps \in (0,\nicefrac 1{12})$.
    Now, the two perturbed distributions coincide with the baseline but for a small additive perturbation on some points of the support. The crucial observation is that the only way the learner can discriminate between scenarios is by posting prices in the exploration region (\Cref{cl:no_exploration})! This implies that the history of the algorithm does not differ much under $\mathbb{E}^0$, $\mathbb{E}^1$, and $\mathbb{E}^2$, up to an additive term that depends on the number of times the algorithm posted prices in the exploration region (and thus observed different feedback). This intuition is formalized by the following \new{c}laim.
    \begin{claim}
    \label{cl:TV}
        For any $\eps \in (0,\tfrac 1{12})$, the following inequality holds true for all $i,j \in \{1,2\}$, with $i \neq j$:
        \[
            |\Ei{M_T(j)} - \Ez{M_T(j)}| \le 4\eps T \sqrt{\Ez{N_T}}
        \]
    \end{claim}
    \begin{proof}[Proof of \Cref{cl:TV}]
        For any $t = 1,2,\dots,T$, the random prices $(P_t,Q_t)$ selected by $\A$ at round $t$ are a deterministic function of the random feedback $Z_1, \dots, Z_{t-1}$, received in the previous $t-1$ time steps. In formula, for any $i,j$ as in the statement, we then have the following
        \begin{align}
            \nonumber
            |\Ei{ M_T(j) } - \Ez{ M_T(j) }| &=
            \babs{\sum_{t = 2}^T \Pi{ (P_t,Q_t) \in O^j} - \Pz{ (P_t,Q_t) \in O^j}}\\&\le
            \sum_{t = 2}^T \babs{\Pi{ (P_t,Q_t) \in O^j} - \Pz{ (P_t,Q_t) \in O^j}}\\
        \label{eq:TV}&\le
            \sum_{t = 2}^T \bno{ \mathbb P^i_{(Z_1, \dots, Z_{t-1} )} - \mathbb P^0_{(Z_1, \dots, Z_{t-1} )}}_{\mathrm{TV}},
        \end{align}
        where $||{\cdot}||_{\mathrm{TV}}$ denotes the total variation norm and $P^i_{(Z_1, \dots, Z_{t-1} )}$ is the push-forward measure (on $(\{0,1\}^2)^{t-1}$) induced by the execution history of the algorithm up to time $t-1$ when the valuations are drawn according to $\mathbb P^i$, for $i \in \{ 0,1,2\}$. In \Cref{claim:KL} in the Appendix we prove via the Pinsker inequality and a careful case-analysis that the following inequality holds for any $t$ and $i$:
        \[
            \bno{ \mathbb P^i_{(Z_1, \dots, Z_{t-1} )} - \mathbb P^0_{(Z_1, \dots, Z_{t-1} )}}_{\mathrm{TV}} \le 4 \eps \sqrt{\Ez{N_T}}.    
        \]    
        Combining this inequality with \new{Inequality~\ref{eq:TV}} yields the desired bound.     
    \end{proof}

    \Cref{cl:TV} allows us to lower bound the right-hand side of \new{Inequality~\ref{eq:intermediate_i}} with something that does not depend on the perturbed distribution $\Pb^1$, but on the baseline $\Pb^0$: 

    \begin{align}
    \nonumber
        \mathbb E^1&\left[\max_{p \in [0,1]} \sum_{t=1}^T \gft_t(p) - 2 \sum_{t=1}^T \gft_t(p_t,q_t)\right]\\
        &\ge \Eo{\tfrac \eps 3 M_T(2) - \tfrac \eps 3 M_T(1)} + \tfrac \eps 6 T\tag{by \new{Inequality~\ref{eq:intermediate_i}}}\\
        \label{eq:final_o}
        &\ge \Ez{\tfrac \eps 3 M_T(2) - \tfrac \eps 3 M_T(1)} + \tfrac \eps 6 T -  \tfrac{4}{3} \eps^2 T \sqrt{\Ez{N_T}}
    \end{align}
    
    The analogue of \new{Inequality~\ref{eq:intermediate_i}} for probability measure $\Pb^2$ can be similarly derived and, together with $\Cref{cl:TV}$, it gives the analogue of \new{Inequality~\ref{eq:final_o}} for $\Pb^2$: 
    \begin{align}
    \nonumber
        \mathbb E^2&\left[\max_{p \in [0,1]} \sum_{t=1}^T \gft_t(p) - 2 \sum_{t=1}^T \gft_t(p_t,q_t)\right]\\
    \label{eq:final_t}
    &\ge \Ez{\tfrac \eps 3 M_T(1) - \tfrac \eps 3 M_T(2)} + \tfrac \eps 6 T -  \tfrac{4}{3} \eps^2 T \sqrt{\Ez{N_T}}
    \end{align}

    When the learning algorithm $\A$ faces the baseline distribution $\Pb^0$, then it never suffers negative expected $2$-regret, in particular by \new{Inequalities~\ref{eq:cost_expl_0} and \ref{eq:cost_opt_0}, and Equality \ref{eq:cost_neutrality_0},} we have the following:
    \begin{equation}
        \label{eq:final_0}
        \mathbb E^0\left[\max_{p \in [0,1]} \sum_{t=1}^T \gft_t(p) - 2 \sum_{t=1}^T \gft_t(p_t,q_t)\right]\ge \tfrac{1}{48}\Ez{N_T}
    \end{equation}

    Averaging the three contributions (i.e., considering the expected performance of the deterministic algorithm against the uniform mixture of the three distributions) we have the following: 
    \begin{align*}
        \E{R_T^2(\A,\S^*)} &= \frac 13 \sum_{i=0,1,2}\mathbb E^i\left[\max_{p \in [0,1]} \sum_{t=1}^T \gft_t(p) - 2 \sum_{t=1}^T \gft_t(p_t,q_t)\right]\\
        &\ge 
        \frac \eps 9 T -\frac{8}{9} \eps^2 T \sqrt{\Ez{N_T}} + \frac{1}{144}\Ez{N_T}\tag{\new{Ineq. from \ref{eq:final_o} to \ref{eq:final_0}}}\\
        &\ge 
        \frac \eps 9 T + \frac{1}{144}\sqrt{\Ez{N_T}} \left(\sqrt{\Ez{N_T}} - 128 \eps^2 T\right)\tag*{\text{(minimized in $\sqrt{\Ez{N_T}}= 64 \eps^2 T$})}\\
        &= \frac \eps 9 T \left(1 - 256
        \eps^3 T\right) \ge \frac{1}{144} T^{\nicefrac 23}. \tag*{\text{(for $\eps = \tfrac 1{8}T^{-\nicefrac 13}$)}}
    \end{align*}
    Note, we carried calculations by assuming $\eps \in (0,\nicefrac 1{12})$, so in order for it to be compatible with the specific value of $\eps$ we choose we need to assume $T \ge 4$, this (together with the fact that the inequality in the statement is trivially verified for $T<4$) concludes the proof of \Cref{lem:two_thirds}.
    \end{proof}

    To conclude our lower bound, we need to show how to {\em hide} the three prices $p_0,p_1, $ and $p_2$. To this end, it is sufficient to add a small, random, perturbation to $\S^*$. 
    \begin{theorem}[Lower bound on $2$-regret for two prices and two-bit feedback]   
    \label{thm:lower-two-bits-two-prices}
        In the two-bit feedback model where the learner is allowed to post two prices, for all horizons $T$, the minimax $2$-regret satisfies \[
            R_T^{2,\star}\ge \frac{1}{288} T^{\nicefrac 23}.
        \]
    \end{theorem}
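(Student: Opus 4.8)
The plan is to bootstrap Lemma~\ref{lem:two_thirds} through a perturbation that \emph{hides} the optimal price $\tfrac12$, and then to pay for the factor $2$ in the benchmark using the (near-)exact relation $\gft(\tfrac12)\approx 2\cdot\gft(\text{second best})$ that is built into the construction. First I would rescale the whole two-distribution family into the sub-interval $[0,\tfrac12]$ (multiply all valuations and the distinguished prices $\tfrac13,\tfrac12,\tfrac23$ by $\tfrac12$) and then translate the rescaled instance by a uniform random shift $\xi\in[0,\tfrac12]$, obtaining a randomized family of instances supported in $[0,1]$ whose hidden optimal price is $\tfrac14+\xi$. By Yao's minimax theorem it suffices to lower bound the expected $2$-regret of an arbitrary \emph{deterministic} learner against this family. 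For a deterministic learner the two prices posted at each step are fixed functions of the observed history, so conditionally on the history the event ``one of the posted prices equals $\tfrac14+\xi$'' has probability $0$ over the continuous shift; hence, up to a null event, the learner never posts the optimum, and conditioning further on $\xi$ it lies exactly in the regime of Lemma~\ref{lem:two_thirds} applied to the shifted--rescaled instance (the value of $\xi$ is merely extra side information the learner may exploit, which does not weaken the lemma).

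Next I would translate the lemma's conclusion into a $2$-regret bound. Conditioning on $\xi$ and on which of the two sub-distributions is in force, the rescaled Lemma~\ref{lem:two_thirds} gives $\sum_t\gft_t(p_t,q_t)\le \max_{p\ne\tfrac14+\xi}\sum_t\gft_t(p)-c'T^{2/3}$ for a constant $c'>0$, hence $\E{\sum_t\gft_t(p_t,q_t)}\le \E{\max_{p\ne\tfrac14+\xi}\sum_t\gft_t(p)}-c'T^{2/3}$. On the other hand, on every realization the hidden optimum earns $\tfrac12(\tfrac13+\tfrac{\eps}{3})$ per round in expectation and any other price at most $\tfrac12(\tfrac16+\tfrac{\eps}{3})$; since $\gft_t(\cdot)$ realizes only finitely many step-function patterns, both quantities concentrate to within $O(\sqrt T)$, so $\E{\max_{p}\sum_t\gft_t(p)}=\tfrac12(\tfrac13+\tfrac{\eps}{3})T+O(\sqrt T)$ and $\E{\max_{p\ne\tfrac14+\xi}\sum_t\gft_t(p)}=\tfrac12(\tfrac16+\tfrac{\eps}{3})T+O(\sqrt T)$. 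Combining,
\[
 R_T^{2,\star}\;\ge\;\E{\max_{p}\sum_t\gft_t(p)}-2\,\E{\sum_t\gft_t(p_t,q_t)}\;\ge\;2c'T^{2/3}-\tfrac{\eps}{6}\,T-O(\sqrt T),
\]
the linear term being precisely the deficit between $\mathrm{OPT}$ and twice the second best. I would then choose $\eps=\Theta(T^{-1/3})$ with a constant small enough that $\tfrac{\eps}{6}T\le c'T^{2/3}$, while checking that re-running the exploration-versus-commitment balancing inside the proof of Lemma~\ref{lem:two_thirds} at this $\eps$ still yields a $\Theta(T^{2/3})$ penalty; this gives $R_T^{2,\star}\ge \tilde c\,T^{2/3}$.

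The hard part will be the bookkeeping forced by the factor $2$: because the $\alpha=2$ benchmark is essentially twice the second best, the naive combination of the lemma with the optimum produces a \emph{negative} $\Theta(\eps T)$ term that must be dominated by the $\Theta(T^{2/3})$ lower bound against the second best, which pins $\eps$ to $\Theta(T^{-1/3})$ with a delicately chosen constant and requires reopening the proof of Lemma~\ref{lem:two_thirds} to control its constant at that scale. A second, more structural subtlety is that the perturbation is genuinely load-bearing: in the \emph{un-perturbed} family a learner that simply parks one fixed price inside a second-best region already attains $2$-regret $O(\sqrt T)$, so one must argue that the random shift truly prevents the learner from cheaply locating and exploiting such a region --- i.e.\ that ``finding where the construction sits'' is not substantially easier than ``identifying which sub-distribution is in force'' --- together with the routine but necessary measure-zero argument that a deterministic learner almost surely never posts the single hidden optimal price.
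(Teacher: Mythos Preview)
Your approach is the paper's: hide $p^{\star}$ by a random shift, decompose the $2$-regret as $(\mathrm{OPT}-2\cdot\text{second best})+2\cdot(\text{second best}-\mathrm{ALG})$, and invoke Lemma~\ref{lem:two_thirds} for the second summand. The paper uses a small shift $x\in[0,\delta]$ (then rescales by $1/(1+\delta)$) rather than your large shift in $[0,\tfrac12]$, and records the first summand tersely as ``$\delta\,\Theta(T)$'' without unpacking the $\eps$-balancing you correctly identify as the delicate step.

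One caution on your write-up. The phrase ``the value of $\xi$ is merely extra side information the learner may exploit'' is wrong as stated: a learner that is literally told $\xi$ can post $p^{\star}=\tfrac14+\xi$ and the whole construction collapses. The correct formulation is that for almost every fixed $\xi$ the deterministic learner (which is \emph{not} told $\xi$) never posts $\tfrac14+\xi$ along any history, hence belongs to the class of algorithms to which Lemma~\ref{lem:two_thirds} applies for that shifted instance; no side information is granted. Conversely, your final worry --- that the learner might cheaply ``locate where the construction sits'' --- is not an obstacle: even if the learner pins down $\xi$ to arbitrary positive precision via the two-bit feedback, any price it posts is almost surely different from the single point $\tfrac14+\xi$, so it remains confined to the second-best regions and Lemma~\ref{lem:two_thirds} applies verbatim. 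That is why the paper can get away with an arbitrarily small $\delta$ and never needs to argue that locating the shift is itself hard.
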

    \begin{proof}
        We prove this result via Yao's principle, by showing the existence of a randomized sequence of valuations against which any deterministic algorithm suffers, on average, the desired $2$-regret. Consider an adversary that constructs a random sequence $\overline{\mathcal S}$ via the following procedure: the adversary draws an uniform random variable $U \in [0,\nicefrac 1{2}]$ up-front and then generates a sequence of valuations $(S_t,B_t)$ from $\mathcal{S}^*$. The sequence actually posted by the adversary is as follows:
        \[
            \overline{S}_t = \tfrac 12 \cdot S_t + U, \overline{B}_t = \tfrac 12 \cdot B_t + U. 
        \]
        Consider any fixed deterministic algorithm $\mathcal A$. At each iteration, it observes one of four feedback signals (one in $\{0,1\}^2$), therefore the set of all the possible prices $P_{\mathcal A}$ it posts is discrete and finite. Let now $\mathcal E$ be the event that the realized $U=u$ is such that $(\nicefrac {p_1}2 + u, \nicefrac {p_1}2 + u), (\nicefrac {p_0}2 + u, \nicefrac {p_0}2 + u),$ and $(\nicefrac {p_2}2 + u, \nicefrac {p_2}2  + u)$ do not belong to $P_{\mathcal A}$; clearly $\P{\cal E} = 1$. Finally, let $\textbf A$ be the family of all the deterministic algorithms that can post two prices, receive two-bit feedback but cannot post prices $p_0,p_1,p_2$, we have the following:
        \begin{align*}
            \E{R_T^2(\mathcal A,\overline{\mathcal S})} &= \E{R_T^2(\mathcal A,\overline{\mathcal S})\mid \cal E} \P{\cal E} + \E{R_T^2(\mathcal A,\overline{\mathcal S})\mid \cal E^C}\P{\cal E^C}\\
            &=\E{R_T^2(\mathcal A,\overline{\mathcal S})\mid \cal E} \tag{\text{$\P{\cal E} = 1$}}\\
            &\ge \frac 12 \sup_{\hat \A \in \textbf A} \E{R_T^2(\hat \A,{\mathcal S^*})}\\
            &\ge \frac{1}{288} T^{\nicefrac 23}. \tag{by \Cref{lem:two_thirds}}
        \end{align*}
        Note, the first inequality follows by the observation that the $2$-regret suffered by $\A$ against $\overline{\mathcal S}$ is at least the $2$-regret suffered by the best deterministic algorithm that has the same constraint as in \Cref{lem:two_thirds}, multiplied by $\nicefrac 12$ because the original hard instance is scaled by that factor. 
        \end{proof}

\section{Discussion, Extensions, and Open Problems}

     In this paper, we investigate the sequential bilateral trade problem with adversarial valuations. We study various feedback scenarios and consider the possibility for the mechanism to post one price vs. when it can post different prices for buyer and seller. We identify the exact threshold of $\alpha$ that allows sublinear $\alpha$-regret. We show that with a partial feedback it is impossible to achieve sublinear $\alpha$-regret for any constant $\alpha$ with a single price while $2$-regret is achievable with $2$ prices. Finally, we show a separation in the minimax $2$-regret between full and partial feedback. Although in this paper we only consider the gain from trade, our positive results trivially also hold with respect to social welfare. Furthemore, modifying our lower bound from \Cref{thm:lower-full-2-eps} it is possible to show that sublinear $\alpha$-regret is not achievable for $\alpha < 2$ with respect to social welfare. An obvious open problem, with respect to both gain from trade and to social welfare, consists in determining the exact regret term as a function of $T$. Clearly there is a gap in our Table of results, and the exact term is yet unclear also for social welfare. We focus on the sequential problem where at each step one buyer and one seller appears. It would be interesting to study the model where multiple buyers and multiple sellers arrive at each time step and sellers have values for their goods, buyers have values for the different goods. \new{Finally, it is natural to adapt the bilateral trade model so to capture more complex features arising from the ridesharing platform applications, such as time duration, contextual information (e.g., location, destination, type of car), and the interplay of different drivers and users interacting at the same time.}

\section*{Acknowledgement}
Yossi Azar was supported in part by the Israel Science Foundation (grant No. 2304/20). Federico Fusco was supported by the ERC Advanced
Grant 788893 AMDROMA ``Algorithmic and Mechanism
Design Research in Online Markets'', PNRR MUR project
``PE0000013-FAIR'', and PNRR MUR project ``IR0000013-
SoBigData.it''. The authors would like to thank Roberto Colomboni for pointing out an issue in a previous proof of \Cref{thm:lower-two-bits-two-prices}.
    
\bibliographystyle{plainnat}
\bibliography{references}

\clearpage
\appendix
\appendix

\section{Missing Proof}

\begin{claim}
    \label{claim:KL}
    For any time step $t$ and $i \in \{1,2\}$, the following inequality holds true for any $\eps \in (0,\tfrac 1{12})$:
    \[  
        \bno{ \mathbb P^i_{(Z_1, \dots, Z_{t-1} )} - \mathbb P^0_{(Z_1, \dots, Z_{t-1} )}}_{\mathrm{TV}} \le 4 \eps \sqrt{\Ez{N_T}}.
    \]
\end{claim}
\begin{proof}
    Recall, $\mathbb P^i_{(Z_1, \dots, Z_{t-1} )}$ is the (push-forward) probability measure induced by the first $t-1$ feedback observed on $(\{0,1\}^2)^{t-1}$.
    For any time step $t$ and $i=1, 2$ we can apply the Pinsker inequality and the chain rule for the KL divergence: 
    \begin{align}
    \nonumber
        \bno{ &\mathbb P_{(Z_1,\dots,Z_t)}^0 - \mathbb P^i_{(Z_1,\dots,Z_t)}}_{\mathrm{TV}}
    \le 
        \sqrt{\frac12 \kl \left( \Pb_{(Z_1,\dots,Z_t)}^0,\, \Pb^k_{(Z_1,\dots,Z_t)} \right)}
    \\
    \label{eq:secondKL}
    &\le
        \sqrt{ \frac12 \left(\kl\left( \Pb_{Z_1}^0, \, \Pb^k_{Z_1} \right) + \sum_{s=2}^t \Ez{{ \kl \left( \Pb_{Z_s \mid Z_1,\dots,Z_{s-1} }^0 , \, \Pb^k_{Z_s \mid Z_1,\dots,Z_{s-1}} \right)}}\right) }.
    \end{align}
    We upper bound the $\kl$ terms, starting from the one corresponding to the first time step. To this end, it is useful to partition the exploration region $X$ introduced in the main body into five parts: $X_1 = [0,\tfrac 16)\times [0,\tfrac 14] \cap \U$ (corresponding to $E_1$ in the main body), $X_2 = [0,\tfrac 16) \times (\tfrac 14,\tfrac 56]$, $X_3 = [0,\tfrac 16) \times (\tfrac 56, 1]$, $X_4 = [\tfrac 16,\tfrac 34) \times (\tfrac 56,1]$ and $X_5 = [\tfrac 34, 1] \times (\tfrac 56,1] \cap \U$ (corresponding to $E_2$ in the main body). For a visualization of the $X_i$ we refer to \Cref{fig:feedback_exploration}.

    The algorithm $\A$ we are considering is deterministic, thus $(P_1,Q_1)$ is a fixed pair of prices in $[0,1]^2$. We know that the only possibility for the feedback $Z_1$ to behave differently under $\Pb^i$ and $\Pb^0$ is to have $(P_1,Q_1)$ in $X$ (by \Cref{cl:no_exploration}), therefore we have the following:
    \begin{align}
    \kl&\big(\Pb^0_{Z_1},\Pb^i_{Z_1}\big)
        \nonumber
        \\
        \label{eq:KL1}
        &= \sum_{k = 1, \dots, 5}
        \left(\sum_{z \in \{0,1\}^2} \log \left( \frac{\Pz{Z_1 = z}}{\Pi{Z_1 = z}}\right)\Pz{Z_1 = z}\right)\ind{(P_1,Q_1) \in X_k}
    \end{align}
    We study separately the terms of \new{Equality~\ref{eq:KL1}} corresponding to different $k$. We have then $5$ cases.

    \paragraph{Case (i) : $(P_1,Q_1) \in X_1$} Under $\Pb^0$, $\Pb^1$ and $\Pb^2$ the probabilities of observing $(1,0)$ and $(0,0)$ are both $0$, because none of the eight points that make up the support of the distributions that lie in $Q^{(1,0)}_{P_1,Q_1}$ or $Q^{(0,0)}_{P_1,Q_1}$ (for the definition of the $Q^z_{p,q}$ we refer to the proof of \Cref{cl:no_exploration}). Under $\Pb^0$ the probabilities of observing $(1,1)$ is $\nicefrac 18$, while of observing $(0,1)$ is $\nicefrac 78$ (because $Q_{P_1,Q_1}^{(1,1)}$, respectively $Q_{P_1,Q_1}^{(0,1)}$ contains one point, respectively seven points, of the support of the distribution, and $\Pb^0$ is uniform over its support). Under $\Pb^{1}$, respectively $\Pb^{2}$, the probability of observing $(1,1)$ is $\nicefrac 18 + \eps$, respectively $\nicefrac 18 - \eps$, while of observing $(0,1)$ is $\nicefrac 78 - \eps$, respectively $\nicefrac 78 + \eps$.
    The corresponding term in \new{Equality~\ref{eq:KL1}} for $\Pb^1$ then becomes:
    \begin{align}
    \notag
         \sum_{z \in \{(1,1),(0,1)\}} &\log \frac{\Pz{Z_1 = z}}{ \Po{Z_1 = z} } \Pz{Z_1 = z} \cdot \ind{(P_1,Q_1) \in X_1}\\
            \notag
        &
        =
        \frac18\left[
        \log \frac{\nicefrac{1}{8}}{\nicefrac{1}{8} + \eps}
        +
        7\log\frac{\nicefrac{7}{8}}{\nicefrac{7}{8} - \eps}\right]
         \cdot \ind{(P_1,Q_1) \in X_1}\\
        \label{eq:KLterm1_1}&\le 16 \eps^2 \cdot \ind{(P_1,Q_1) \in X_1}, 
    \end{align}
    where the inequality holds true for any $\eps \in (0,\nicefrac 1{12})$ (in the following we avoid repeating that the inequality only holds in the domain specified in the statement). A similar chain of inequalities holds for probability measure $\Pb^2$:
    \begin{align}
    \notag
    \sum_{z \in \{(1,1),(0,1)\}} &\log \frac{\Pz{Z_1 = z}}{ \Pt{Z_1 = z} } \Pz{Z_1 = z} \cdot \ind{(P_1,Q_1) \in X_1}\\
        \nonumber
        & =
        \frac18\left[
        \log \frac{\nicefrac{1}{8}}{\nicefrac{1}{8} - \eps}
        +
        7\log\frac{\nicefrac{7}{8}}{\nicefrac{7}{8} + \eps}\right]
         \cdot \ind{(P_1,Q_1) \in X_1} \\
         \label{eq:KLterm1_2}
         &\le 16 \eps^2 \cdot \ind{(P_1,Q_1) \in X_1}, 
    \end{align}

    \begin{figure}
        \centering
        \includegraphics[width = 0.5\textwidth]{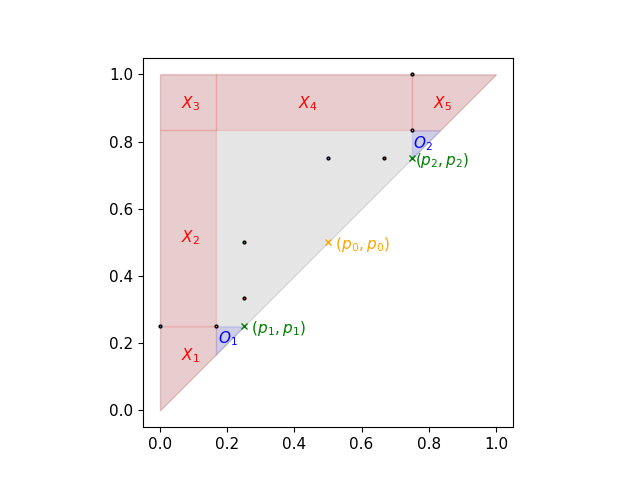}
        \caption{Feedback regions}
        \label{fig:feedback_exploration}
    \end{figure}
    
    \paragraph{Case (ii) : $(P_1,Q_1) \in X_2$} In $X_2$ we need to account for various cases. The feedback $(1,1)$ always happens with probability $0$, while $(1,0)$ is observed with probability $\nicefrac 18$ under $\Pb^0$, $\nicefrac 18 + \eps$ under $\Pb^1$, and $\nicefrac 18-\eps$ under $\Pb^2$. Feedback $(0,1)$ is observed with probability $\nicefrac 78 - \pi$ (regardless of the distribution), for some $\pi \in \{\nicefrac{1}8,\nicefrac{1}4,\nicefrac 38,\nicefrac{5}4\}$. Note, the parameter $\pi$ only depends on the value of the second coordinate, as an example if $(p,q)$ belongs to $X_2$ and it holds that $ \nicefrac 14 < q \le \nicefrac12$, then $Q_{p,q}^{(0,1)}$ contains $6$ points of the support, for a total probability (regardless of $i$) of $\nicefrac 34$ ($\pi = \nicefrac 18$). The last feedback, $(0,0)$ is observed with the remaining probability in the various cases.
    The corresponding term in \new{Equality~\ref{eq:KL1}} then becomes for $\Pb^1$:
    \begin{align}
    \notag
    \sum_{z \in \{(1,0),(0,0)\}} &\log \frac{\Pz{Z_1 = z}}{ \Po{Z_1 = z} } \Pz{Z_1 = z} \cdot \ind{(P_1,Q_1) \in X_2}\\
        \notag
        &=
        \left[\frac18
        \log \frac{\nicefrac{1}{8}}{\nicefrac{1}{8} + \eps}
        +
        \pi\log\frac{\pi}{\pi - \eps}\right]
         \cdot \ind{(P_1,Q_1) \in X_2}  \tag{for some $\pi \in \{\tfrac{1}8,\tfrac{1}4,\tfrac 38,\tfrac{5}4\}$}
        \\
        &
        \le
        \frac18\left[
        \log \frac{\nicefrac{1}{8}}{\nicefrac{1}{8} + \eps}
        +
        \log\frac{\nicefrac{1}{8}}{\nicefrac{1}{8} - \eps}\right]
         \cdot \ind{(P_1,Q_1) \in X_2} \tag{$\pi \to \pi \log \tfrac \pi{\pi-\eps}$ non-increasing in $(\eps,1)$} \\
      \label{eq:KLterm2_1}
      & \le 16 \eps^2 \cdot \ind{(P_1,Q_1) \in X_2}, 
    \end{align}
    where the inequality holds true for any $\eps \in (0,\nicefrac 1{24})$ and we used that the function $x \to x \log(\nicefrac x{x-\eps})$ is non-increasing in $(\eps,1)$, and $\pi$ is at least $\nicefrac 18$. Similar inequality can be derived for $\Pb^2$:
    \begin{align}
    \notag
    \sum_{z \in \{(1,0),(0,0)\}} &\log \frac{\Pz{Z_1 = z}}{ \Pt{Z_1 = z} } \Pz{Z_1 = z} \cdot \ind{(P_1,Q_1) \in X_2}\\
        \notag
        &=
        \left[\frac18
        \log \frac{\nicefrac{1}{8}}{\nicefrac{1}{8} - \eps}
        +
        \pi\log\frac{\pi}{\pi + \eps}\right]
         \cdot \ind{(P_1,Q_1) \in X_2}  \tag{for some $\pi \in \{\tfrac{1}8,\tfrac{1}4,\tfrac 38,\tfrac{5}4\}$}
        \\
        & \le
        \frac18\left[
        \log \frac{\nicefrac{1}{8}}{\nicefrac{1}{8} - \eps}
        +
        \log\frac{\nicefrac{1}{8}}{\nicefrac{1}{8}+ \eps}\right]
         \cdot \ind{(P_1,Q_1) \in X_2} \tag{$\pi \to \pi \log \tfrac \pi{\pi+\eps}$ non-increasing in $(0,1)$}\\
         \label{eq:KLterm2_2}
         &\le 16 \eps^2 \cdot \ind{(P_1,Q_1) \in X_2}. 
    \end{align}
    
    \paragraph{Case (iii) : $(P_1,Q_1) \in X_3$}  Under $\Pb^0$, $\Pb^1$ and $\Pb^2$ the probability of observing $(1,1)$ are both $0$, while that of observing $(0,0)$ are both $\nicefrac 34$. Under $\Pb^0$ the probabilities of observing $(1,0)$ or $(0,1)$ are both $\nicefrac 18$, while for $\Pb^i$ are $\nicefrac 18 + \eps$ and $\nicefrac 18 - \eps$ (the order depends on $i = 1,2$). The corresponding term in \new{Equality~\ref{eq:KL1}} then becomes:
    \begin{align}
     \notag
    \sum_{z \in \{(1,0),(0,1)\}} &\log \frac{\Pz{Z_1 = z}}{ \Pi{Z_1 = z} } \Pz{Z_1 = z} \cdot \ind{(P_1,Q_1) \in X_3}\\ \notag
&
=
    \frac18\left[
    \log \frac{\nicefrac{1}{8}}{\nicefrac{1}{8} - \eps}
    +
    \log\frac{\nicefrac{1}{8}}{\nicefrac{1}{8} + \eps}\right]
     \cdot \ind{(P_1,Q_1) = X_3} \\
     \label{eq:KLterm3}
&\le 16 \eps^2 \cdot \ind{(P_1,Q_1) \in X_3}.
\end{align}

\paragraph{Case (iv) : $(P_1,Q_1) \in X_4$} In $X_4$ we need to account for various cases, symmetrically to what we did for $X_2$. The feedback $(1,1)$ happens with probability $0$ in any case, while $(0,1)$ is observed with probability $\nicefrac 18$ under $\Pb^0$, $\nicefrac 18 - \eps$ under $\Pb^1$, and $\nicefrac 18 + \eps$ under $\Pb^2$. Feedback $(1,0)$ is observed with probability $\nicefrac 78 - \pi$ (regardless of the distribution), for some $\pi \in \{\nicefrac{1}8,\nicefrac{1}4,\nicefrac 38,\nicefrac{5}8\}$; note, the parameter $\pi$ only depends on the value of the first coordinate. The last feedback, $(0,0)$ is observed with the remaining probability in the various cases. The corresponding term in \new{Equality~\ref{eq:KL1}} can be bounded as in \new{Inequality~\ref{eq:KLterm2_2}} for $\Pb^1$:
    \begin{align}
     \notag
    \sum_{z \in \{(0,0),(0,1)\}} &\log \frac{\Pz{Z_1 = z}}{ \Po{Z_1 = z} } \Pz{Z_1 = z} \cdot \ind{(P_1,Q_1) \in X_4}\\
        &=
        \left[\frac18
        \log \frac{\nicefrac{1}{8}}{\nicefrac{1}{8} - \eps}
        +
        \pi\log\frac{\pi}{\pi + \eps}\right]
         \cdot \ind{(P_1,Q_1) \in X_4} 
        \nonumber\\
        \label{eq:KLterm4_1}
        &\le 16 \eps^2 \cdot \ind{(P_1,Q_1) \in X_4}. 
    \end{align}
    The same derivation as in \new{Inequality~\ref{eq:KLterm2_1}} can be carried over for $\Pb^2$:
    \begin{align}
    \notag
    \sum_{z \in \{(0,0),(0,1)\}} &\log \frac{\Pz{Z_1 = z}}{ \Pt{Z_1 = z} } \Pz{Z_1 = z} \cdot \ind{(P_1,Q_1) \in X_4}\\
        \notag &=
        \left[\frac18
        \log \frac{\nicefrac{1}{8}}{\nicefrac{1}{8} + \eps}
        +
        \pi\log\frac{\pi}{\pi - \eps}\right]
         \cdot \ind{(P_1,Q_1) \in X_4}
        \\
        \label{eq:KLterm4_2}
        &\le 16 \eps^2 \cdot \ind{(P_1,Q_1) \in X_4}.
    \end{align}

\paragraph{Case (v) : $(P_1,Q_1) \in X_5$} Under $\Pb^0$, $\Pb^1$ and $\Pb^2$ the probabilities of observing $(0,1)$ and $(0,0)$ are both $0$, because none of the eight points that make up the support $\supp $ of the distributions lies in $Q^{(0,1)}_{P_1,Q_1}$ or $Q^{(0,0)}_{P_1,Q_1}$. Under $\Pb^0$ the probabilities of observing $(1,1)$ is $\nicefrac 18$, while of observing $(1,0)$ is $\nicefrac 78$ (because $Q_{P_1,Q_1}^{(1,1)}$, respectively $Q_{P_1,Q_1}^{(1,0)}$ contains one point, respectively seven points, of the support of the distribution, and $\Pb^0$ is uniform over its support). Under $\Pb^{1}$, respectively $\Pb^{2}$, the probability of observing $(1,1)$ is $\nicefrac 18 - \eps$, respectively $\nicefrac 18 + \eps$, while of observing $(0,1)$ is $\nicefrac 78 + \eps$, respectively $\nicefrac 78 - \eps$.
    The corresponding term in \new{Equality~\ref{eq:KL1}} for $\Pb^1$ then becomes:
    \begin{align}
    \notag
    \sum_{z \in \{(1,1),(0,1)\}} &\log \frac{\Pz{Z_1 = z}}{ \Po{Z_1 = z} } \Pz{Z_1 = z} \cdot \ind{(P_1,Q_1) \in X_5}\\
    \notag
        &=
        \frac18\left[
        \log \frac{\nicefrac{1}{8}}{\nicefrac{1}{8} - \eps}
        +
        7\log\frac{\nicefrac{7}{8}}{\nicefrac{7}{8} + \eps}\right]
         \cdot \ind{(P_1,Q_1) \in X_5} \\
         \label{eq:KLterm5_1}
         &\le 16 \eps^2 \cdot \ind{(P_1,Q_1) \in X_5}. 
    \end{align}
    A similar chain of inequalities holds for $\Pb^2$:
    \begin{align}
    \notag
        \notag
    \sum_{z \in \{(1,1),(0,1)\}} &\log \frac{\Pz{Z_1 = z}}{ \Pt{Z_1 = z} } \Pz{Z_1 = z} \cdot \ind{(P_1,Q_1) \in X_5}\\
    \notag
        &=
        \frac18\left[
        \log \frac{\nicefrac{1}{8}}{\nicefrac{1}{8} + \eps}
        +
        7\log\frac{\nicefrac{7}{8}}{\nicefrac{7}{8} - \eps}\right]
         \cdot \ind{(P_1,Q_1) \in X_5} \\
         \label{eq:KLterm5_2}
         &\le 16 \eps^2 \cdot \ind{(P_1,Q_1) \in X_5}.
    \end{align}

    We can plug into \new{Equality~\ref{eq:KL1}} the case analysis  (\new{Inequalities~\ref{eq:KLterm1_1}, \ref{eq:KLterm2_1}, \ref{eq:KLterm3}, \ref{eq:KLterm4_1}, and \ref{eq:KLterm5_1}} for $\Pb^1$ 
 and \new{Inequalities~\ref{eq:KLterm1_2}, \ref{eq:KLterm2_2}, 
 \ref{eq:KLterm3}, \ref{eq:KLterm4_2}, and \ref{eq:KLterm5_2}} for $\Pb^2$) to get the desired bound for the first $\mathrm{KL}$ term:
    \begin{equation}
    \label{eq:KLbound1}    
        \kl\big(\Pb^0_{Z_1},\Pb^i_{Z_1}\big)  \le 16 \eps^2 \ind{(P_1,Q_1) \in X}
    \end{equation}

        Once we have our argument for $s = 1$ we generalize it for the other time steps. The crucial observation is, once again, that the algorithm $\A$ is deterministic, therefore if we condition on a specific realization of the feedback signals up to time $s-1$, the price $(P_s,Q_s)$ posted at time $s$ becomes deterministic. Moreover, the random variables $(S_s,Q_s)$ are drawn independently at each round, therefore for any feedback $z$ and time $s$ it holds that:
\begin{align*}
&
    \kl (\Pb_{Z_{s} \mid Z_1,\dots,Z_{s-1}}^0 , \, \Pb^i_{Z_s \mid Z_1,\dots,Z_{s-1}} )
\\ &= \sum_{\substack{k=1\dots 5 \\ z \in \{0,1\}^2}}  \log \frac{\Pz{Z_s = z \mid Z_{1 : s-1}}}{\Pi{Z_s = z\mid Z_{1 : s-1}}}\Pz{Z_1 = z\mid Z_{1 : s-1}}\ind{(P_s,Q_s) \in X_k \mid Z_{1 : s-1}}
\\
&
=\sum_{\substack{k=1\dots 5 \\ z \in \{0,1\}^2}} \log \frac{\Pz{Z_s = z \mid(P_s,Q_s)  \in X_k }}{\Pi{Z_s = z \mid(P_s,Q_s)  \in X_k }}\Pz{Z_s = z \mid(P_s,Q_s)  \in X_k }\ind{(P_s,Q_s) \in X_k \mid Z_{1 : s-1}},
\end{align*}
where we denoted with $Z_{1 : s-1}$ the sequence of random variables $Z_1, Z_2, \dots Z_{s-1}.$
With the same exact case analysis and calculations that has been carried over for $s=1$ (this time on $(P_s,Q_s)$ given past history and not $(P_1,Q_1)$) we get the same bound as in \new{Inequality~\ref{eq:KLbound1}}:
\begin{align}
    \label{eq:KLbound2}
    \kl (\Pb_{Z_s \mid Z_1,\dots,Z_{s-1}}^0 , \, \Pb^i_{Z_s \mid Z_1,\dots,Z_{s-1}} ) \le 16 \eps^2\ind{(P_s,Q_s) \in X \mid Z_1 \dots  Z_{s-1}}.
\end{align}
Finally, we can plug the bounds in \new{Inequalities~\ref{eq:KLbound1} and \ref{eq:KLbound2}} into our original inequality (\new{Inequality~\ref{eq:secondKL}}) to get the final result 
\begin{align*}
    \nonumber
        &\bno{ \mathbb P_{(Z_1,\dots,Z_t)}^0 - \mathbb P^i_{(Z_1,\dots,Z_t)}}_{\mathrm{TV}}
    \\
    \nonumber
    &\quad \le 4 \eps \sqrt{\ind{(P_1,Q_1) \in X} +  \sum_{s=2}^t \Ez{\ind{(P_s,Q_s) \in X_k \mid Z_1 \dots Z_{s-1}}}} \\
    &\quad = 4 \eps \sqrt{\Ez{N_T}}.\qedhere
\end{align*}
\end{proof}

\end{document}